\newtheorem{theorem}{Theorem}
\newtheorem{lemma}[theorem]{Lemma}
\newtheorem{proposition}[theorem]{Proposition}
\title{Controllability of the  Periodic Quantum Ising Spin Chain}
\author{Domenico D'Alessandro\thanks{Department of Mathematics, Iowa State University, Ames, IA 50011, daless@iastate.edu},  and Yasemin Isik\thanks{Department of Mathematics, Iowa State University, Ames, IA 50011, yisik@iastate.edu} }
\begin{document}

\maketitle

\begin{abstract}

The Ising spin model is a fundamental theoretical model in physics which has received large attention in the context of quantum phase transitions  
\cite{QPT}. The Hamiltonian of the system consists of two parts whose ground states describe   different macroscopic observable behavior. A parameter that interpolates between such Hamiltonians may physically represent  a transverse magnetic field which drives the phase transition between  two macroscopic states. Such a model   has recently been applied to perform quantum annealing in quantum information \cite{DanielAN}, a process that can be seen as a controlled quantum dynamics protocol.

In this paper, we present a controllability analysis of the quantum Ising  periodic  chain  of $n$ spin $\frac{1}{2}$ particles  \cite{Pierotti} where the interpolating parameter plays the role of the control.
A fundamental result in the control theory of quantum systems states that the set of achievable  evolutions  is (dense in) the Lie group 
corresponding to the Lie algebra generated by the Hamiltonians of the system. Such a {\it dynamical Lie algebra} therefore characterizes all the state transitions available for 
a given system.  For the Ising spin  periodic chain we characterize such a dynamical Lie algebra and therefore the set of all reachable states. 
In particular, we prove that the dynamical Lie algebra is  a  $(3n-1)$-dimensional Lie sub-algebra of $su(2^n)$ which is a direct sum of a two dimensional center and a $(3n-3)$-dimensional semisimple Lie subalgebra. This in turn   is the direct sum of $n-1$ Lie algebras isomorphic to $su(2)$ parametrized by the eigenvalues of a fixed matrix. We display the basis for each of these Lie subalgebras.  Therefore the problem of control for the Ising spin periodic chain  is, modulo the two dimensional center, a problem of simultaneous control of $n-1$ spin $\frac{1}{2}$ particles.   In the process of proving this result, we develop some tools which are of general interest for the controllability analysis of quantum systems with symmetry.

\end{abstract}

\vspace{0.5cm}

{\bf Keywords:} Ising spin chain, Quantum controllability analysis, Lie algebraic techniques. 

\vspace{0.5cm}

\section{Introduction}\label{IandM}

We shall study  the controllability of a system of $n$ spin $\frac{1}{2}$ particles in a periodic chain with next neighbor Ising interaction. The Hamiltonian of the system is  of the form 
\begin{equation}\label{Hamiltonian}
H=H_0(1-u)+H_1u, 
\end{equation}
with 
\begin{equation}\label{H0H1}
H_0:=\sum_{j=1}^{n-1} \sigma_z^j \sigma_z^{j+1}+ \sigma_z^n \sigma_z^1, \qquad H_1 =\sum_{j=1}^n \sigma_x^j.
\end{equation}
Here we use the standard physics convention  of denoting by $\sigma^j$ the tensor product of $n$,  $2 \times 2$,  identity operators, $\bf 1$, except in position $j$ which is occupied by $\sigma$. The matrices $\sigma_{x,y,z}$ are the {\it Pauli matrices},    
\begin{equation}\label{PauliMat}
\sigma_x:=\begin{pmatrix} 0 & 1 \cr 1 & 0 \end{pmatrix}, \qquad \sigma_y:=\begin{pmatrix} 0 & i \cr -i & 0 \end{pmatrix}, \quad \sigma_z:=\begin{pmatrix} 1 & 0 \cr 0 & -1 \end{pmatrix}, 
\end{equation}
which satisfy the {\it commutation relations}, 
\begin{equation}\label{commurel}
[i\sigma_x, i\sigma_y ]=2i\sigma_z, \qquad [i \sigma_y, i \sigma_z] =2i \sigma_x, \qquad [i\sigma_z, i \sigma_x] =2i\sigma_y,  
\end{equation}
and the  {\it anti-commutation relations}, 
\begin{equation}\label{anticommurel}
\{\sigma_x, \sigma_x\}=\{ \sigma_y, \sigma_y\}=\{\sigma_z, \sigma_z\}=2{\bf 1}, \qquad \{\sigma_j,\sigma_k\}=0 \quad \text{for } j \not=k. 
\end{equation}
The control $u$ in (\ref{Hamiltonian}) may represent a tuning of the interaction constant between the spins and the interaction of the spin with a  transverse  external field. The model can also be considered with two independent controls $u_0$ multiplying $H_0$ and $u_1$ multiplying $H_1$. The controllability analysis does not substantially change in that case and we chose the model (\ref{Hamiltonian}) as we have in mind adiabatic control protocols with $u \in [0,1]$ driving the state from an eigenvector of $H_1$ to an eigenvector of $H_0$.

The model (\ref{Hamiltonian}) is one of the most studied models in condensed matter physics especially in the setting of quantum phase transitions (see, e.g., \cite{Pierotti}, \cite{QPT}  )  
and goes under the name of {\it transverse field Ising model}. However, up to our knowledge, a control theoretic analysis of this model has not been  presented in the literature. We do this in this paper motivated by recent applications  of this model in quantum information processing and in particular as a testbed for a quantum annealing protocol \cite{DanielAN}. Such a protocol can, in fact, be seen as a {\it control protocol} where a `schedule' is decided (i.e., a function $u=u(t)$) in (\ref{Hamiltonian}) to drive the state from a (ground) eigenstate of $H_1$ to a (ground) eigenstate of $H_0$. Such a state may  encode  the solution of the computational task at hand. When $u=0$ the Hamiltonian in (\ref{Hamiltonian}) is $H_0$ and the ground state corresponds to spins that align  in opposite directions (anti-ferromagnetic)  or in the same direction (ferromagnetic) according to whether we do not place or we  place a minus sign in front of $H_0$. When $u=1$, the Hamiltonian reduces to $H_1$. This is the term that allows transitions between the eigenstates of $H_0$ and introduces disorder in the state of the system. From a quantum phase transition perspective there is a value of $u$ where the macroscopic behavior of the system (in the limit $n \rightarrow \infty$) will be drastically changed.

 The model (\ref{Hamiltonian}) is a special case of the {\it general} Ising spin model where one replaces the term $H_0$ with $H_0^{'}:=\sum_{j,k} J_{j,k} \sigma_z^j \sigma_z^k$ where $j,k$ refers to the spin location on a lattice and $J_{j,k}$ are coupling constant typically considered nonzero only for nearest neighbor spins. The general model is a fundamental conceptual model not only for phase transitions but also for phenomena in scientific areas different from condensed matter physics such as to describe bacterial vortexes in biology \cite{Bacter}.  The study of the controllability of these models is also motivated by the recent interest in (geometric) quantum machine learning, where the quantum evolution is seen as a learning protocols and existing symmetries in the Hamiltonians allow to keep unchanged the structure of the quantum data with the goal to improve robustness and to overcome some of the limitations of existing protocols. For a discussion of this point we refer to \cite{Marco2}, \cite{Marco1}, \cite{Scare}, and references therein (see, however,  the recent discussion in \cite{BP} for a comparison between quantum and classical machine learning in this context).    From a mathematics perspective, the recent paper \cite{Scare} is more related to the present one as it contains a complete classification of all the Lie algebras that can be generated by the terms appearing in 2-body Hamiltonians for 1-D structures such as (\ref{Hamiltonian}), (\ref{H0H1}). We shall however consider the Lie algebra generated by the `full' $H_0$ and $H_1$ Hamiltonians in (\ref{Hamiltonian}), (\ref{H0H1}).


From a quantum control theory point of view, system (\ref{Hamiltonian}) is an example of an {\it uncontrollable system}, that is, a system such that, independently of the control law applied, it is not possible to generate every  unitary operations. This is due to the presence of a symmetry group, that is,  a group of operators  that leave  the Hamiltonians describing the system invariant for any value of the control $u$. In particular, the Hamiltonian (\ref{Hamiltonian}) is invariant under the cyclic group $C_n$ generated by the permutation $(1 \, 2 \, \cdots \, n)$. Quantum control systems which admit a group of symmetries  have dynamics that split into the parallel of certain subsystems after an appropriate coordinate change. Techniques to find such a coordinate transformation, borrowing from Lie algebras and representation theory, have been explored in \cite{Mikobook} and \cite{conJonas}.  The state space splits into the direct sum of invariant subspaces and the system is said to be {\it subspace controllable} if full controllability is achieved on each (or some) of the invariant subspaces. Subspace controllability has been recently investigated in several quantum systems of interest, for example in \cite{Dan1}, \cite{Dan2},  and, in a different context,  in \cite{Marco1}.

The most popular procedure to test controllability of quantum systems (in the closed, not interacting with the environment,  case) involves the use of tools of Lie algebras and Lie group theory (see, e.g., \cite{Mikobook}). One considers the Hamiltonians available for the system (in our case $H_0$ and $H_1$ in (\ref{Hamiltonian})) which are Hermitian matrices. Once these are multiplied by the imaginary unit $i$, they become skew-Hermitian matrices, i.e., matrices in $u(N)$ ($su(N)$), the Lie algebra of $N \times N$ skew-Hermitian matrices (with zero trace). Here $N$ is the dimension of the system, with $N=2^{n}$ in our case. The Lie algebra ${\cal L}$ generated by these matrices is called the {\it dynamical Lie algebra} (DLA) and it determines the controllability properties of the system. In particular,  the set of reachable evolutions for the system is dense in the connected Lie group $e^{\cal L}$ associated with ${\cal L}$ and it is equal to $e^{\cal L}$ if $e^{\cal L}$ is compact.\footnote{This is most often the case when dealing  with quantum systems as the dynamical Lie algebra splits into the direct (commuting)  sum of an Abelian center ${\cal A}$  and a semisimple Lie algebra ${\cal S}$, i.e., ${\cal L}={\cal S} \oplus {\cal A}$ and $e^{\cal L}=e^{\cal S} e^{\cal A}=e^{\cal A} e ^{\cal S}$ and $e^{\cal S}$ is always compact (cf. , e.g., \cite{Mikobook}).} Therefore the DLA ${\cal L}$ characterizes the dynamical and control theoretic properties of a closed quantum system. We determine the dynamical Lie algebra ${\cal L}$ for the Ising spin periodic  chain (\ref{Hamiltonian}) in this paper, and  we analyze its structure in detail. We show that  it is the direct sum of a two dimensional center and $n-1$ simple Lie algebras isomorphic to $su(2)$.  We give the basis of each of these Lie algebras. 

In order to calculate the dynamical Lie algebra for the system (\ref{Hamiltonian}), (\ref{H0H1}), that is, the Lie algebra generated by $iH_0$ and $i H_1$, in principle,  we  have to perform (repeated) Lie brackets of such (large) matrices. In order to render the calculation tractable for any $n$,  we will describe  some computational techniques which are of more general interest. We will do this in section 
\ref{Meth}. In particular we will map a tensor product $i \sigma_1 \otimes \sigma_2 \otimes \cdots \otimes \sigma_n$ where $\sigma_j$, $j=1,...,n$ is one of the Pauli matrices  
$\{\sigma_{x,y,z} \}$ or the identity ${\bf 1}$ to the corresponding {\it Pauli string} (see, e.g., \cite{Reggio} and references therein)  $A_1A_2\cdots A_n$, with $A_j$ equal to $X,Y,Z,{\bf 1}$ according to whether $\sigma_j$ is $\sigma_x$, $\sigma_y$, $\sigma_z$, or ${\bf 1}$. The Lie bracket corresponds to a certain product between Pauli strings. In section \ref{CDLA}, we describe the dynamical Lie algebra by first giving  a basis of such a Lie algebra and then show how it splits in the direct (commuting) sum of a two dimensional center and $n-1$ simple ideals all of which isomorphic to $su(2)$. Section  \ref{LDC} is devoted to 
an example of application for a low dimensional case, $n=3$. We characterize the set of states that can be reached starting from the separable state $|000\rangle$ and prove that, although the dynamics can generate only a restricted  set of states,  it can lead to  states that have maximum {\it distributed entanglement} \cite{Wootters}.


\section{Methods}\label{Meth}

\subsection{Lie brackets of symmetric Hamiltonians}

Given two (or more) skew-Hermitian matrices $A$ and $B$ defining the dynamics of a quantum control system, a basis for the associated dynamical Lie algebra ${\cal L} $ is usually calculated  following an iterative algorithm of Lie brackets calculations described for example in \cite{Mikobook} (Chapter 3) (cf. the proof of Theorem \ref{DLAchar}). 
When considering Hamiltonians such as $H_0$ and $H_1$, which are sums of several terms but are  invariant under the action of a symmetry group, we need to organize the Lie bracket calculations  to make it feasible.  We describe here a method to do this which we will apply in the next section to compute a basis of the dynamical Lie algebra for our problem (\ref{Hamiltonian}). In our case the symmetry group can be taken as the cyclic group over $n$ elements $C_n$ generated by single translation $(1\, 2 \, \cdots \, n)$.\footnote{Notice that we use here the {\it cycle notation}  for a general element of the permutation group. Notice also that the Hamiltonians $H_0$ and $H_1$ in (\ref{H0H1}) are invariant under a larger group which includes `reflections' about the center element. For example in the case $n=5$ this includes the permutation $(1 \, 5)(2\, 4)$. For simplicity however we shall only consider permutation with respect to $C_n$.} 

Any element in a  Lie algebra that is invariant under the action of a finite group $G$ can be written as the {\it symmetrization} of an element $A$, that is,  as $\sum_{P \in G} P A P^{-1}$. 
For example, $iH_1$ in (\ref{H0H1}) can be written as $iH_1:=\sum_{P \in C_n} Pi \sigma_x^1 P^T$.  
Motivated by this, we define a {\it symmetrization operation} ${\cal C}(A):=\sum_{P \in G} PAP^{-1}$. 
When calculating the  Lie bracket of two symmetrized matrices, we have 
$$
\left[ {\cal C}(A),{\cal C}(B) \right]=\left[ \sum_{P \in G} PAP^{-1}, \sum_{Q \in G} QBQ^{-1} \right]=\sum_{P \in G} P \left( \sum_{Q \in G} [ A, P^{-1} QBQ^{-1}P ] \right) P^{-1}=
$$
$$
\sum_{P \in G} P \left( \sum_{S\in G} [A,SBS^{-1}]\right) P^{-1}
={\cal C} \left( \sum_{S\in G} \left[ A, SBS^{-1} \right] \right). 
$$
The above formula suggests to sum all the  commutators of a `{\it fixed}' $A$ with the elements in the `orbit' of $B$ under the action of the group $G$ and then take the symmetrization of the result. 

As an example,  let us calculate the Lie bracket $[i H_0, iH_1]$ which is the first step in the calculation of the dynamical Lie algebra for the system of $n$ spins in a circular chain. We have, 
$$
\left[ iH_1,iH_0\right] =\left[ {\cal C}(i\sigma_x^1), {\cal C}(i\sigma_z^1 \sigma_x^2)\right]={\cal C}\left(   \sum_{S\in C_n} [i\sigma_x^1, iS \sigma_z^1 \sigma_z^2S^T] \right). 
$$
When calculating $\sum_{S\in C_n} [i\sigma_x^1, iS \sigma_z^1 \sigma_z^2S^T]$, the only elements which give nonzero contributions are the Lie brackets $[i\sigma_x^1, i\sigma_z^1\sigma_z^2]$ and 
$[i\sigma_x^1, i\sigma_z^1\sigma_z^n]$, which give, respectively, using (\ref{commurel}) $-2i\sigma_y^1 \sigma_z^2$ and 
$-2i\sigma_y^1 \sigma_z^n$. This  gives  
\begin{equation}\label{firststep}
{\cal C}\left( -2i\sigma_y^1 \sigma_z^2 \right)+
{\cal C}\left( -2i\sigma_y^1 \sigma_z^n\right)=
{\cal C}\left( -2i\sigma_y^1 \sigma_z^2 \right)+
{\cal C}\left( -2i\sigma_z^1 \sigma_y^2 \right). 
\end{equation}
We shall not use calculations with `$\sigma$' Pauli matrices in the following but rather map such calculations into equivalent calculations with {\it Pauli strings}.

\subsection{Algebra with Pauli strings}

A  {\it Pauli string} (see, e.g. \cite{Reggio}) is a string of symbols in $\{X,Y,Z,{\bf 1}\}$ corresponding to a tensor product of Pauli matrices and $2 \times 2$ identities $\{\sigma_x,\sigma_y,\sigma_z, {\bf 1}\}$, multiplied by $i$. For example, $XYZ{\bf 1} {\bf 1} ZX{\bf 1}$ corresponds to the tensor product $i \sigma_x \otimes  \sigma_y \otimes \sigma_z \otimes {\bf 1} \otimes {\bf 1} \otimes \sigma_z \otimes \sigma_x \otimes {\bf 1}$. It is convenient to define  a product on the space of strings 
which corresponds to the Lie bracket of tensor products.  This is based on the commutation and anti-commutation relations (\ref{commurel}) 
and (\ref{anticommurel}) together with the relations 
\begin{equation}\label{tobeadded1}
[A\otimes B, C \otimes D]=\frac{1}{2} \{A,C\} \otimes [B,D] + \frac{1}{2} [A,C] \otimes \{B,D\},
\end{equation}
and 
\begin{equation}\label{anticomfor}
\{ A \otimes B, C \otimes D\}=\frac{1}{2} \{ A,C\} \otimes \{B,D\} + \frac{1}{2} [A,C] \otimes [B,D]. 
\end{equation}
In particular, when taking the product (commutator) of two strings $A:=A_1A_2\cdots A_n$, and 
$B:=B_1 B_2\cdots B_n$, one first looks at the places where $A$ and-or $B$ has an identity ${\bf 1}$. The corresponding product is in general a linear combination of strings which all have   in the same position the corresponding symbol of the other string. For example, when taking the product of $XYZ{\bf 1} {\bf 1} Z X {\bf 1} $ and ${\bf 1} Y Z X {\bf 1} X Z Z$, we will have 
$$
\begin{matrix} 
\qquad X & Y & Z & {\bf 1}  & {\bf 1} & Z & X & {\bf 1} \cr
\qquad {\bf 1} & Y & Z & X  & {\bf 1} & X & Z & {Z} \vspace{0.2cm} \cr
\hline \cr
\qquad X & ? & ?  & X  & {\bf 1} & ? & ? & Z
\end{matrix}, 
$$

Now all the other positions in both strings are occupied by a symbol in $\{X,Y,Z\}$. Then we consider positions where the two strings have the same symbol. These positions give a ${\bf 1}$ in the resulting string because of the relation (\ref{anticommurel}).  Therefore, in our example, we have 
$$
\begin{matrix} 
\qquad X & Y & Z & {\bf 1}  & {\bf 1} & Z & X & {\bf 1} \cr
\qquad {\bf 1} & Y & Z & X  & {\bf 1} & X & Z & {Z} \vspace{0.2cm} \cr
\hline \cr
\qquad X & {\bf 1} & {\bf 1}  & X  & {\bf 1} & ? & ? & Z
\end{matrix}, 
$$

We are left with the positions where the two strings have different symbols. If the number of these positions is even (including zero) the result is $0$.\footnote{The fact that commutators (anticommutators) of strings with an even (odd) numbers of different symbols in corresponding positions are zero is well known (see, e.g., \cite{Reggio} and references therein). We provide here justification to our statements for completeness.}  One can see this by induction on  $k$ for a number of positions $2k$. For $k=1$, we have zero from (\ref{tobeadded1}) since both anticommutators appearing on the right hand side are zero because of (\ref{anticommurel}). Using the same formula (\ref{tobeadded1}) and breaking two even  tensor products  as $A \otimes B$ and $C \otimes D$  with $B$ and $D$ a  tensor product with two factors (different for $B$ and $D$ in the corresponding positions) and using the  inductive assumption, one sees that zero is obtained for any even number. This case covers our example because the number of different positions between $XYZ{\bf 1} {\bf 1} ZX {\bf 1}$ and ${\bf 1} YZX{\bf 1} XZZ$ is $2$. Thus, the result is zero.

Consider now the case where the number of different positions is  {\it odd}. The result of $A_1A_2\cdots A_{2k+1} \, \times \, B_1B_2\cdots B_{2k+1}$, is $2(-1)^k  C_1 C_2\cdots C_{2k+1}$ with $C_j$ obtained according to the
 rules (\ref{commurel}), that is, $X \times Y \rightarrow Z$,   $Y \times Z \rightarrow X$, $Z \times X \rightarrow Y$ (with the sign $-$ if the order is inverted). To see 
this, we can use induction on $k$. For $k=0$ (single position) it is true,\footnote{Calculate for example $[\sigma_x, \sigma_y]=-2i\sigma_z$ from (\ref{commurel}). The two $i$'s in front of the tensor products give an extra $-1$ factor  which cancel the $-1$ in $-2i \sigma_z$, to give the desired result.} and assume it true for $k-1$, we can write the two tensor products as $E_1\otimes O_1$ and $E_2 \otimes O_2$ with $E_{1,2}$ strings of length $2$ and $O_{1,2}$ strings of length $2(k-1)+1$. We have, using (\ref{tobeadded1}) 
$$
\left[ E_1\otimes O_1, E_2 \otimes O_2 \right]=\frac{1}{2} \left( [E_1,E_2] \otimes \{ O_1, O_2\} + \{ E_1, E_2\} \otimes [O_1, O_2] \right) = 
\frac{1}{2}  \{ E_1, E_2\} \otimes [O_1, O_2],  
 $$
 since $[E_1,E_2]=0$. Furthermore write $E_1:=A \otimes B$ and $E_2=C \otimes D$.  Using (\ref{anticomfor}) and the fact that the anticommutators are zero in this case (from (\ref{anticommurel})), we get 
 $$
 \left[ E_1\otimes O_1, E_2 \otimes O_2 \right]=\frac{1}{4} [A,C] \otimes [B, D] \otimes [O_1, O_2]. 
 $$
 Now there is a factor $4$ that comes from the two commutators $[A,C]$ and $[B,D]$ according to (\ref{commurel}). Moreover, there is a factor $(-1)$ which comes from multiplying the Pauli matrices in the even products by $-i$. This, because of the inductive assumption applied to $  [O_1, O_2]$, proves the claim.


\vspace{0.25cm}

In the following when performing Lie brackets of element of the dynamical Lie algebra,  we shall use a combination of the techniques described in this section. In particular, let us, for the sake of illustration, calculate again the Lie bracket in (\ref{firststep}) for the case $n=3$ we fix $X{\bf 1} {\bf 1}$ and `circulate' $ZZ{\bf 1}$. Using the above rules we obtain 
$$
\begin{matrix} 
\qquad X & {\bf 1} & {\bf 1}  \cr
\qquad Z & Z  & {\bf 1} \vspace{0.2cm} \cr
\hline -2 \quad  Y & Z & {\bf 1} \cr
\qquad 
\end{matrix}, 
\qquad 
\begin{matrix} 
\qquad X & {\bf 1} & {\bf 1}  \cr
\qquad Z & {\bf 1}  & Z \vspace{0.2cm} \cr
\hline -2 \quad Y & {\bf 1} & Z \cr
\qquad 
\end{matrix}, 
\qquad 
\begin{matrix} 
\qquad X & {\bf 1} & {\bf 1}  \cr
\qquad {\bf 1} & Z  & Z \vspace{0.2cm} \cr
\hline \quad  & 0  & \quad   \cr
\qquad 
\end{matrix}, 
$$
which gives the result (\ref{firststep}) after we apply the symmetrizer ${\cal C}$.

\section{Characterization of the Dynamical Lie algebra}\label{CDLA}

Recall the definition of  the symmetrization operation ${\cal C}$ in the previous section for the case of  the cyclic group $C_n$,  
 ${\cal C}(A):=\sum_{P \in C_n} PAP^T$.  We define the following (linearly independent) skew-Hermitian operators 
\begin{equation}\label{Ys}
{\bf Y}^j:=i{\cal C}(\sigma_y\otimes \sigma_x^{\otimes j}\otimes \sigma_y \otimes {\bf 1}^{\otimes n-j-2}), \qquad j=0,1,...,n-2, 
\end{equation} 
\begin{equation}\label{Zs}
{\bf Z}^j:=i{\cal C}(\sigma_z\otimes \sigma_x^{\otimes j}\otimes \sigma_z\otimes {\bf 1}^{\otimes n-j-2}), \qquad j=0,1,...,n-2, 
\end{equation}
\begin{equation}\label{YZs}
{\bf YZ}^j:=i{\cal C}(\sigma_z\otimes \sigma_x^{\otimes j}\otimes \sigma_y\otimes {\bf 1}^{\otimes n-j-2})+i {\cal C}(\sigma_y\otimes \sigma_x^{\otimes j}\otimes \sigma_z\otimes {\bf 1}^{\otimes n-j-2}), \qquad j=0,1,...,n-2, 
\end{equation}
\begin{equation}\label{Xs}
{\bf X}:=i{\cal C}(\sigma_x\otimes {\bf 1}^{\otimes n-1}), \qquad {\bf XX}:=i{\cal C}(\sigma_x^{\otimes n-1} \otimes {\bf 1}). 
\end{equation}

\subsection{A basis of the dynamical Lie algebra} 

We start our analysis of the dynamical Lie algebra by calculating a basis  in the following theorem.    

\begin{theorem}\label{DLAchar} 
A basis of the dynamical Lie algebra ${\cal L}$ associated with a closed spin chain is given by the $3n-1$ skew-Hermitian matrices ${\bf Y}^j$, $j=0,1,...,n-2$, ${\bf Z}^j$, $j=0,1,...,n-2$, ${\bf YZ}^j$, $j=0,1,...,n-2$, ${\bf X}$ and ${\bf XX}$ defined in (\ref{Ys})-(\ref{Xs}).    
\end{theorem}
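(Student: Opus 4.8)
The plan is to show that the real span ${\cal V}$ of the $3n-1$ operators defined in (\ref{Ys})--(\ref{Xs}) is exactly ${\cal L}$. Note first that $iH_1={\bf X}$ and $iH_0={\bf Z}^0$, so the generators already lie in ${\cal V}$. The three things to prove are then: (i) the listed operators are linearly independent, so $\dim{\cal V}=3n-1$; (ii) $[{\cal V},{\cal V}]\subseteq{\cal V}$, whence ${\cal L}\subseteq{\cal V}$; and (iii) each of the $3n-1$ operators is reached by iterated Lie brackets of $iH_0,iH_1$, whence ${\cal V}\subseteq{\cal L}$. Throughout I would carry out the bracket computations with Pauli strings, using the symmetrization identity $[{\cal C}(A),{\cal C}(B)]={\cal C}\big(\sum_{S\in C_n}[A,SBS^{-1}]\big)$ of Section \ref{Meth}: fixing one factor and circulating the other reduces each bracket to a sum over $n$ cyclic shifts, of which only boundedly many are nonzero (those in which the two strings have an odd number of sites carrying different non-identity symbols). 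For (i), each operator is $i$ times a sum of one orbit (for ${\bf Y}^j,{\bf Z}^j,{\bf X},{\bf XX}$) or two orbits (for ${\bf YZ}^j$) of Pauli strings under $C_n$; since distinct non-identity Pauli strings are linearly independent in $u(2^n)$, it is enough to observe that, for $n\ge 3$, all the orbits appearing across the list are pairwise distinct and that the two orbits inside each ${\bf YZ}^j$ are distinct (so ${\bf YZ}^j\neq 0$), which follows from the cyclically invariant ``shape'' of the strings (which sites carry $Y$, $Z$, $X$). For $n=2$ one has ${\bf X}={\bf XX}$, so the statement is to be read for $n\ge 3$.

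For (ii) the decisive structural fact is that ${\bf Y}^j,{\bf Z}^j,{\bf YZ}^j$ are symmetrizations of strings of the form ``a block of consecutive $X$'s flanked by two symbols from $\{Y,Z\}$, followed by identities''. The Pauli multiplication rules then force the bracket of two such strings to be either $0$, or another string of the same flanked-block type, or --- precisely when the two $X$-blocks overlap so as to wrap around the chain --- an all-$X$ string, which is covered by ${\bf X},{\bf XX}$. One records in particular $[{\bf X},{\bf Z}^j]=-2\,{\bf YZ}^j$, $[{\bf X},{\bf Y}^j]=2\,{\bf YZ}^j$, $[{\bf X},{\bf YZ}^j]=4({\bf Z}^j-{\bf Y}^j)$; the ``ladder'' relations $[{\bf Z}^0,{\bf YZ}^j]=(\text{terms of index }\le j)+c_j\,{\bf Z}^{j+1}$ with $c_j\neq0$ (for instance $[{\bf Z}^0,{\bf YZ}^0]=-4\,{\bf X}-4\,{\bf Z}^1$); the remaining brackets involving ${\bf Z}^0$; and that $[{\bf YZ}^{n-2},{\bf Z}^0]$ equals a combination of lower-index operators plus a nonzero multiple of ${\bf XX}$ --- all of these staying inside ${\cal V}$.

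For (iii) I would run precisely these relations as the standard iterative bracketing procedure (cf.\ \cite{Mikobook}) starting from ${\bf X}=iH_1$ and ${\bf Z}^0=iH_0$: $[{\bf X},{\bf Z}^0]\propto{\bf YZ}^0$ gives ${\bf YZ}^0$; $[{\bf X},{\bf YZ}^0]$ together with ${\bf Z}^0$ gives ${\bf Y}^0$; then inductively, given ${\bf Y}^i,{\bf Z}^i,{\bf YZ}^i$ for all $i\le j$, the ladder relation with ${\bf Z}^0$ produces ${\bf Z}^{j+1}$, after which $[{\bf X},{\bf Z}^{j+1}]$ gives ${\bf YZ}^{j+1}$ and $[{\bf X},{\bf YZ}^{j+1}]$ gives ${\bf Y}^{j+1}$; continuing up to $j=n-2$ and then taking $[{\bf YZ}^{n-2},{\bf Z}^0]$ produces ${\bf XX}$. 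Hence all $3n-1$ operators lie in ${\cal L}$, so ${\cal V}\subseteq{\cal L}$; combined with (ii) this yields ${\cal L}={\cal V}$, and with (i) that the $3n-1$ operators form a basis.

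The step I expect to be the main obstacle is the closure verification (ii): organizing the pairwise brackets $[{\bf A}^j,{\bf B}^{j'}]$ so that the parity rule (an even number of differing sites forces $0$) and the wrap-around contribution at sites $n$ and $1$ are handled uniformly for all admissible indices, and confirming that no bracket ever escapes the listed span --- in particular that the only genuinely new strings that can appear are the two all-$X$ strings, not longer or differently flanked ones. Pinning down exactly which brackets realize ${\bf XX}$, with a verifiably nonzero coefficient, is the single most delicate point.
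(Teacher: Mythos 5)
Your strategy is sound and, for the substantive part (iii), it reproduces the paper's proof: the paper runs exactly the iterative bracketing algorithm of \cite{Mikobook} starting from ${\bf X}=iH_1$ and ${\bf Z}^0=iH_0$, with the same ladder relations you list ($[{\bf Y}^k,{\bf X}]=-2{\bf YZ}^k$, $[{\bf Z}^k,{\bf X}]=2{\bf YZ}^k$, $[{\bf YZ}^k,{\bf X}]=4({\bf Y}^k-{\bf Z}^k)$, $[{\bf YZ}^0,{\bf Z}^0]=4{\bf X}+4{\bf Z}^1$, and $[{\bf YZ}^{n-2},{\bf Z}^0]=4{\bf XX}-4{\bf Y}^{n-3}$ producing ${\bf XX}$ with nonzero coefficient), organized as an induction on the depth. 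Where you diverge is step (ii): you propose to verify closure of the span under \emph{all} pairwise brackets, and you correctly identify this as the hardest point --- but it is also unnecessary, and you leave it unexecuted. The iterative algorithm only requires checking brackets of the current elements against the two \emph{generators}: once no new linearly independent element arises from $[\,\cdot\,,{\bf X}]$ and $[\,\cdot\,,{\bf Z}^0]$, the Jacobi identity guarantees the span is the full generated Lie algebra. The paper exploits this: after reaching ${\bf Y}^{n-2}$ and ${\bf XX}$ it checks only the four terminal brackets $[{\bf Y}^{n-2},{\bf X}]$, $[{\bf Y}^{n-2},{\bf Z}^0]$, $[{\bf XX},{\bf X}]$, $[{\bf XX},{\bf Z}^0]$ (all landing in the span) and stops; the full commutator table among the $3n-1$ elements is never needed. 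So your ``main obstacle'' dissolves if you replace (ii) by the generator-only closure check, which is already contained in the computations of (iii) plus those four brackets. Your linear-independence observation (distinct $C_n$-orbits of Pauli strings, with the caveat $n\ge 3$) is correct and is implicitly used by the paper via orthogonality of the basis elements.
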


\begin{proof} The proof follows the iterative algorithm to obtain a basis of the dynamical Lie algebra which is described in Chapter 3 of \cite{Mikobook}. One starts with a set of generators, in this case ${\bf X}$ and ${\bf Z}^0$, which are considered elements at `depth 0'. At step $k=1,2,....$ one takes the Lie bracket of the elements of depth $k-1$ with the generators,  in this case ${\bf X}$ and ${\bf Z}^0$,  and eliminates all the linear combinations of elements of 
depth $\leq k-1$ to obtain the elements  of depth $k$. The process ends at a step $k$ when there are no new linearly independent elements and/or the number 
of linearly independent elements, which constitutes  the basis of the dynamical Lie algebra reaches $N^2$ or $N^2-1$ (the dimensions of $u(N)$ and $su(N)$ respectively), where $N$ is the dimension of the system, $2^n$ in our case. The algorithm ends in a finite number of steps since the system is finite dimensional. 

In our case, let us first consider the elements at depth $D=1,2,...,2n-3$. For these elements we  prove by induction on $D$ the following {\bf CLAIM}: The elements obtained for $D$ odd, $D=2k+1$, $k=0,1,...,n-2$,  are ${\bf YZ}^k$, and the elements obtained for $D$ even, $D:=2k$, $k=1,...,n-2$,  are ${\bf Y}^{k-1}$ and ${\bf Z}^k$. 

The case $D=1$ has been already done (modulo some shift in the notation) in (\ref{firststep}). Let us consider the case $D=2$. We calculate $[{\bf YZ}^0, {\bf X}]$, which gives, using Pauli strings algebra, 
$$
\begin{matrix} 
\qquad Z & Y & {\bf 1} & \cdots & {\bf 1}  \cr
\qquad X & {\bf 1}  & {\bf 1} & \cdots & {\bf 1} \vspace{0.2cm} \cr
\hline 2  \quad  Y & Y & {\bf 1} & \cdots & {\bf 1} \cr
\qquad 
\end{matrix}, \quad 
\begin{matrix} 
\qquad Z & Y & {\bf 1} & \cdots & {\bf 1}  \cr
\qquad {\bf 1} & X  & {\bf 1} & \cdots & {\bf 1} \vspace{0.2cm} \cr
\hline - 2  \quad  Z & Z & {\bf 1} & \cdots & {\bf 1} \cr
\qquad 
\end{matrix}, \qquad 
\begin{matrix} 
\qquad Z & Y & {\bf 1} & \cdots & {\bf 1}  \cr
\qquad {\bf 1} & {\bf 1}   & X & \cdots & {\bf 1} \vspace{0.2cm} \cr
\hline \quad   \quad  & \quad & 0  & \quad  & \quad \cr
\qquad 
\end{matrix}, \qquad 
$$
$$
\cdots \qquad \qquad 
\begin{matrix} 
\qquad Z & Y & {\bf 1} & \cdots & {\bf 1}  \cr
\qquad {\bf 1} & {\bf 1}   & {\bf 1} & \cdots & X \vspace{0.2cm} \cr
\hline \quad   \quad  & \quad & 0  & \quad  & \quad \cr
\qquad 
\end{matrix}, 
$$
and 
$$
\begin{matrix} 
\qquad Y & Z  & {\bf 1} & \cdots & {\bf 1}  \cr
\qquad X & {\bf 1}  & {\bf 1} & \cdots & {\bf 1} \vspace{0.2cm} \cr
\hline -2  \quad  Z & Z  & {\bf 1} & \cdots & {\bf 1} \cr
\qquad 
\end{matrix}, \quad 
\begin{matrix} 
\qquad Y & Z & {\bf 1} & \cdots & {\bf 1}  \cr
\qquad {\bf 1} & X  & {\bf 1} & \cdots & {\bf 1} \vspace{0.2cm} \cr
\hline 2  \quad  Y & Y & {\bf 1} & \cdots & {\bf 1} \cr
\qquad 
\end{matrix}, \qquad 
\begin{matrix} 
\qquad Y& Z & {\bf 1} & \cdots & {\bf 1}  \cr
\qquad {\bf 1} & {\bf 1}   & X & \cdots & {\bf 1} \vspace{0.2cm} \cr
\hline \quad   \quad  & \quad & 0  & \quad  & \quad \cr
\qquad 
\end{matrix}, \qquad 
$$
$$
\cdots \qquad \qquad 
\begin{matrix} 
\qquad Y & Z & {\bf 1} & \cdots & {\bf 1}  \cr
\qquad {\bf 1} & {\bf 1}   & {\bf 1} & \cdots & X \vspace{0.2cm} \cr
\hline \quad   \quad  & \quad & 0  & \quad  & \quad \cr
\qquad 
\end{matrix}, 
$$
which gives 
\begin{equation}\label{FRD2-A}
[{\bf YZ}^0, {\bf X}]=4{\bf Y}^0-4{\bf Z}^0. 
\end{equation}
Analogously we get 
\begin{equation}\label{FRD2-B}
[{\bf YZ}^0, {\bf Z}^0]=4{\bf X}+4{\bf Z}^1. 
\end{equation}
Eliminating ${\bf X}$ and ${\bf Z}^0$ which we already have from depth $0$ and normalizing an unimportant factor $4$, we obtain ${\bf Z}^1$ and ${\bf Y}^0$.  Notice that we have assumed that the depth $D$ is $\leq 2n-3$. Therefore such a calculation at depth $D=2$ is only of interest if $n>2$. 
The verification of the inductive steps follows similar calculations: Assume, by inductive assumption,  that at depth $D=2k$ even $k \geq 1$, we have $ {\bf Y}^{k-1}$ and 
${\bf Z}^k$. Then we calculate for depth $D=2k+1$, 
$$
\left[ {\bf Y}^{k-1},  {\bf X}\right]=-2 {\bf YZ}^{k-1}, \qquad \left[ {\bf Z}^k, {\bf X} \right]=2{\bf YZ}^k. 
$$
In these calculations we keep fixed the string corresponding to the first factor  and `move' $X$ in the string corresponding to the second factor ${\bf X}$. The only terms that give a nonzero contributions are the ones where $X$ corresponds to a $Y$ or $Z$ in the same position. 
For depth $D=2k+1$ we also calculate 
$$
\left[ {\bf Y}^{k-1},  {\bf Z}^0\right]=2 {\bf YZ}^{k}, \qquad \left[ {\bf Z}^k, {\bf Z}^0 \right]=-2{\bf YZ}^{k-1}. 
$$
Since ${\bf YZ}^{k-1}$ is achieved from the previous depth from inductive assumption,  the only new term is ${\bf YZ}^{k}$ as predicted by the {\bf CLAIM}  above. This concludes the verification of the inductive step of the {\bf CLAIM} when we go from an even depth to an odd one.

 Assume now we go from an odd depth to an even one. In particular, let the depth $D=2l+1< 2n-3$, and $D > 1$ (the case $l=0$, corresponding to $D=1$ has been already considered in the base step). We have ${\bf YZ}^l$. We calculate for depth $D=2l+2$, 
\begin{equation}\label{duecommutatori}
[{\bf YZ}^l, {\bf X}]=-4{\bf Z}^l+4{\bf Y}^l  \qquad [{\bf YZ}^l, {\bf Z}^0]=-4{\bf Y}^{l-1}+4{\bf Z}^{l+1}. 
\end{equation}
Setting $2k:=2l+2$, we have for the first commutator $-4{\bf Z}^{k-1}+ 4{\bf Y}^{k-1}$. By the inductive assumption ${\bf Z}^{k-1}$ was already obtained from the previous even depth. Therefore we obtain ${\bf Y}^{k-1}$. From the second commutator in (\ref{duecommutatori}) we obtain (since $l=k-1$) $-4{\bf Y}^{k-2}+4{\bf Z}^k$, which gives the new `direction' ${\bf Z}^k$ since $i{\bf Y}^{k-2}$ is already obtained from the previous even step.This agrees with the stated {\bf CLAIM}. 

Summarizing,  until depth $D= 2n-3$ we have the linearly independent (in fact orthogonal) elements ${\bf X}$, ${\bf Y}^j$, $j=0,1,...,n-3$, ${\bf Z}^j$, $j=0,1,...,n-2$, ${\bf YZ}^j$, $j=0,1,...,n-2$, having obtained $i{\bf YZ}^{n-2}$ at the last step. At depth $D=2n-2$, we obtain 
$$
\left[ {\bf YZ}^{n-2}, {\bf X} \right]= -4{\bf Z}^{n-2}+4{\bf Y}^{n-2}, 
$$ 
 which gives $ {\bf Y}^{n-2}$ since we have already obtained $ {\bf Z}^{n-2}$. 
 At depth $D=2n-2$, we also obtain $\left[ {\bf YZ}^{n-2}, {\bf Z}^0 \right]=4{\bf XX}-4{\bf Y}^{n-3}$, which gives ${\bf XX}$ since we have ${\bf Y}^{n-3}$ already. 
 
 Now the algorithm proceeds for depth $D=2n-1$. However we have 
 $[{\bf Y}^{n-2}, {\bf X}]=-2{\bf YZ}^{n-2}$, $[ {\bf Y}^{n-2}, {\bf Z}^0]=0$, $\left[{\bf XX}, {\bf X} \right]=0$, $[{\bf XX}, {\bf Z}^0 ]=-2{\bf YZ}^{n-2}$. None of these Lie brackets increases the dimension of the Lie algebra and therefore the algorithm stops. This concludes the proof of the theorem. 

\end{proof}

For convenience and later use we report in the following  table the commutators of the elements of the Lie algebras with the two generators ${\bf X}$ and ${\bf Z}^0$ as calculated in the proof of the theorem (using the Kronecker $\delta_{k,j}$ symbol). 

\begin{center}
 \vspace{0.5cm}

\begin{tabular}{ |c || c | c | }
\hline 
$[ \cdot, \cdot] $ & {\bf X} & ${\bf Z}^0$ \\
\hline   
$ {\bf Y}^k $ &    -2${\bf YZ}^k$  &  2 $(1-\delta_{k,n-2}){\bf YZ}^{k+1}$\\
\hline
${\bf Z}^k $&  $2{\bf YZ}^k$ & $-2(1-\delta_{0,k}) {\bf YZ}^{k-1}$\\ 
\hline
$ {\bf YZ}^k$ & $-4 {\bf Z}^k+4 {\bf Y}^k$ &     $4 \left( \delta_{k,0} {\bf X} - (1-\delta_{k,0}) {\bf Y}^{k-1} + \delta_{k,n-2} {\bf XX} + (1-\delta_{k,n-2}) {\bf Z}^{k+1} \right)$       \\
\hline 
$ {\bf X}$ &  0 & $-2 {\bf YZ}^0$ \\
\hline 
$ {\bf XX}$ & 0 & $-2{\bf YZ}^{n-2}$\\
\hline
 \end{tabular} 
 \vspace{0.5cm}

 {Table I}   
 
\end{center}

The above table contains all the needed information about the dynamical Lie algebra ${\cal L}$ since ${\bf X}$ and ${\bf Z}^0$ are its generators.

\subsection{Structure of the dynamical Lie algebra}

As a Lie subalgebra of $su(2^{n})$, the dynamical Lie algebra ${\cal L}$ is a {\it reductive} Lie algebra, namely the direct sum of an Abelian subalgebra, its {\it center}, and a certain number of simple ideals.  (cf., e.g.,  Chapter 3 and 4 in \cite{Mikobook}). To analyze the dynamical Lie algebra, therefore we start by calculating the center in the next subsection \ref{subseccen} and then we describe its simple ideals in the following subsection \ref{subsadd}. 

\subsubsection{The center of the dynamical Lie algebra}\label{subseccen}

We calculate  the center as the space of solutions of the system of linear equations in the variables $z_k, y_k, w_k, x,v$, $k=0,1,...,n-2$, 
$$
\left[ \sum_{k=0}^{n-2} z_k{\bf Z}^k+ \sum_{k=0}^{n-2} y_k {\bf Y}^k+ \sum_{k=0}^{n-2} w_k {\bf YZ}^k + x {\bf X} + v {\bf XX} \, , \, {\bf X} \right]=0, 
$$  
$$
\left[ \sum_{k=0}^{n-2} z_k{\bf Z}^k+ \sum_{k=0}^{n-2} y_k {\bf Y}^k+ \sum_{k=0}^{n-2} w_k {\bf YZ}^k + x {\bf X} + v {\bf XX} \, , \, {\bf Z}^0 \right]=0. 
$$  
It follows from the first equation using the commutators in Table I   
$$
\sum_{k=0} z_k [{\bf Z}^k, {\bf X}] + \sum_{k=0}^{n-2} y_k [{\bf Y}^k, {\bf X}] + \sum_{k=0}^{n-2} w_{k}[{\bf YZ}^k,{\bf X}]=  \sum_{k=0}^{n-2} 2 (z_k-y_k) {\bf YZ}^k+ \sum_{k=0}^{n-2} 4 w_k ({\bf Y}^k-{\bf Z}^k)=0, 
$$
which gives $w_k=0,$ and $y_k=z_k$ for $k=0,1,...,n-2$. Plugging these into the second equation and again using the relations in the table, we have
$$
\sum_{k=0}^{n-2} z_{k}\left( [{\bf Z}^k, {\bf Z}^0] + [ {\bf Y}^k, {\bf Z}^0] \right)+ x[{\bf X}, {\bf Z}^0]+v [{\bf XX}, {\bf Z}^0]= 
$$
$$2 \left( z_0 {\bf YZ}^1+\sum_{k=1}^{n-3} z_k (- {\bf YZ}^{k-1} + {\bf YZ}^{k+1} ) -z_{n-2}{\bf YZ}^{n-3} -x{\bf YZ}^0 -v{\bf YZ}^{n-2} \right) = 0,
$$ 
which gives $z_k=z_{k+2}$ for $k=0,1,..., n-4$, and $x=-z_1, v=z_{n-3}$. 
Depending on whether $n$ is even or odd, this yields two cases. If $n$ is odd, then $-x=z_1=z_3=...=z_{n-2}$ and $v=z_0=z_2=...=z_{n-3}$. If $n$ is even, then $-x=v=z_1=z_3=...=z_{n-3}$ and $z_0=z_2=...=z_{n-2}$. Therefore, we obtain the following:

\begin{proposition}\label{centerchar}
For \em{$n$ odd}, the center of the dynamical Lie algebra is spanned by 
\begin{equation}\label{centerodd}
C^o_1:=- {\bf X}+({\bf Y}^1+{\bf Z}^1)+ ({\bf Y}^3+{\bf Z}^3)+ \cdots + ({\bf Y}^{n-2}+{\bf Z}^{n-2}), 
\end{equation}
$$ C^o_2:= {\bf XX}+ ({\bf Y}^0+{\bf Z}^0)+ ({\bf Y}^2+{\bf Z}^2)+\cdots +  ({\bf Y}^{n-3}+{\bf Z}^{n-3}).
$$
For \em{$n$ even}, the center of the dynamical Lie algebra is spanned by 
\begin{equation}\label{centereven}
C^e_1:= - {\bf X}+{\bf XX}+ ({\bf Y}^1+{\bf Z}^1)+ ({\bf Y}^3+{\bf Z}^3)+ \cdots + ({\bf Y}^{n-3}+{\bf Z}^{n-3}),
\end{equation}
 $$ C^e_2:=({\bf Y}^0+{\bf Z}^0)+ ({\bf Y}^2+{\bf Z}^2)+\cdots +  ({\bf Y}^{n-2}+{\bf Z}^{n-2}).
$$

\end{proposition}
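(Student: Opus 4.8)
The plan is to compute the center of ${\cal L}$ as the common kernel of the two linear maps $A\mapsto [A,{\bf X}]$ and $A\mapsto [A,{\bf Z}^0]$ on ${\cal L}$. Since ${\bf X}$ and ${\bf Z}^0$ generate ${\cal L}$ (Theorem~\ref{DLAchar}), the Jacobi identity shows that an element of ${\cal L}$ commuting with both generators commutes with all of ${\cal L}$; hence the center is exactly this common kernel. Writing a general element of ${\cal L}$ in the basis of Theorem~\ref{DLAchar} as $A=\sum_{k=0}^{n-2} z_k{\bf Z}^k+\sum_{k=0}^{n-2} y_k{\bf Y}^k+\sum_{k=0}^{n-2} w_k{\bf YZ}^k+x{\bf X}+v{\bf XX}$, I would expand $[A,{\bf X}]$ and $[A,{\bf Z}^0]$ using Table~I and equate the resulting coordinates (in that same basis) to zero.

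From $[A,{\bf X}]=0$ one reads off, using the first column of Table~I, the relation $\sum_k 2(z_k-y_k){\bf YZ}^k+\sum_k 4w_k({\bf Y}^k-{\bf Z}^k)=0$, which forces $w_k=0$ and $y_k=z_k$ for every $k$. Feeding this into $[A,{\bf Z}^0]=0$ and using the second column of Table~I leaves a linear combination of the ${\bf YZ}^j$ only; collecting the coefficient of each ${\bf YZ}^j$ yields $z_k=z_{k+2}$ for $k=0,\dots,n-4$ together with the boundary relations $x=-z_1$ and $v=z_{n-3}$. Consequently the solution space is determined by just two free scalars, namely the common value of the $z_k$ with even index and the common value of the $z_k$ with odd index, so $\dim Z({\cal L})\le 2$; producing two independent solutions then gives equality.

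It remains to translate these two free scalars into the explicit generators, which is where the parity of $n$ enters, since it decides whether $x=-z_1$ and $v=z_{n-3}$ are controlled by the same chain. For $n$ odd the odd-index set is $\{1,3,\dots,n-2\}$ and the even-index set is $\{0,2,\dots,n-3\}$, so $x$ is tied to the odd chain and $v$ to the even chain; taking the odd chain equal to $1$ and the even chain equal to $0$ (and $y_k=z_k$) yields $C^o_1$, and the opposite choice yields $C^o_2$. For $n$ even the odd-index set is $\{1,3,\dots,n-3\}$ and the even-index set is $\{0,2,\dots,n-2\}$, so now both $x=-z_1$ and $v=z_{n-3}$ are determined by the odd chain and in particular $x=-v$; taking the odd chain equal to $1$ gives $C^e_1$ and the even chain equal to $1$ gives $C^e_2$. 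In each case the two generators involve disjoint subsets of the basis of Theorem~\ref{DLAchar}, so they are linearly independent and form a basis of the center.

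I do not expect a genuine obstacle here: the argument is essentially linear algebra with the bracket data of Table~I already in hand. The only points needing care are keeping the Kronecker $\delta$ terms straight so that the $k=0$ and $k=n-2$ endpoints contribute correctly, and the parity case split, where the coincidence $z_1=z_{n-3}$ for even $n$ is precisely what forces $C^e_1$ to contain both ${\bf X}$ and ${\bf XX}$ while $C^e_2$ contains neither. One should also check the small-$n$ degeneracies (for instance $n=3$, where the recurrence range $k=0,\dots,n-4$ is empty), but the stated formulas, read with the displayed ``$\cdots$'' empty, remain correct.
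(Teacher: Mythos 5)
Your proposal is correct and follows essentially the same route as the paper: the paper also computes the center as the common kernel of $\mathrm{ad}_{\bf X}$ and $\mathrm{ad}_{{\bf Z}^0}$ using Table~I, obtains $w_k=0$, $y_k=z_k$, then $z_k=z_{k+2}$ with $x=-z_1$, $v=z_{n-3}$, and splits by the parity of $n$ exactly as you do. Your explicit remarks on the Jacobi-identity justification and the small-$n$ edge cases are fine additions but do not change the argument.
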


\subsubsection{The simple ideals of the dynamical Lie algebra}\label{subsadd}

To describe the simple ideals of the dynamical Lie algebra ${\cal L}$, we introduce an auxiliary sequence of polynomials $a_k=a_k(\lambda)$, for $k=-1,0,1,...,n-1$, defined recursively as 
\begin{equation}\label{recursiverel}
a_{-1}=0, \quad a_0=1, \quad a_k=\lambda a_{k-1}-a_{k-2}, \qquad \texttt{for } k=1,2,...,n-1.  
\end{equation}
An explicit expression  for $a_{k}=a_{k}(\lambda)$ is  given in the following lemma. 
\begin{lemma}\label{expexpr}
The polynomial $a_k=a_k(\lambda)$, for $k=1,2,...,n-1$ satisfying (\ref{recursiverel}) is given by 
\begin{equation}\label{EE1}
a_k(\lambda)=\sum_{j=0}^{\lfloor{\frac{k}{2}}\rfloor} (-1)^j \begin{pmatrix} k-j \cr j \end{pmatrix} \lambda^{k-2j}.
\end{equation}
\end{lemma}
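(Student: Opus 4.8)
The plan is to prove the identity by induction on $k$, feeding the recursion (\ref{recursiverel}) directly. Write $b_k(\lambda) := \sum_{j=0}^{\lfloor k/2\rfloor} (-1)^j \binom{k-j}{j}\lambda^{k-2j}$ for the right-hand side of (\ref{EE1}), adopting the convention $\binom{m}{r}=0$ whenever $r<0$ or $r>m$, so that every summation range may be freely enlarged to all $j\ge 0$ without changing the value. First I would dispatch the base cases: $b_0(\lambda)=\binom{0}{0}=1=a_0$ and $b_1(\lambda)=\binom{1}{0}\lambda=\lambda=a_1$, the latter matching $a_1=\lambda a_0-a_{-1}=\lambda$.

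For the inductive step, assume (\ref{EE1}) holds for $k-1$ and $k-2$; it suffices to show $b_k=\lambda b_{k-1}-b_{k-2}$, since then $b_k$ obeys the same recursion and the same two initial values as $a_k$. Here $\lambda b_{k-1}(\lambda)=\sum_{j\ge 0}(-1)^j\binom{k-1-j}{j}\lambda^{k-2j}$ directly, and re-indexing $b_{k-2}$ by $j\mapsto j-1$ gives $-b_{k-2}(\lambda)=\sum_{j\ge 0}(-1)^j\binom{k-1-j}{j-1}\lambda^{k-2j}$ (the $j=0$ term vanishing by the convention). Adding these and applying Pascal's rule $\binom{k-1-j}{j}+\binom{k-1-j}{j-1}=\binom{k-j}{j}$ term by term yields $\sum_{j\ge 0}(-1)^j\binom{k-j}{j}\lambda^{k-2j}=b_k(\lambda)$, which closes the induction.

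As an alternative worth keeping in reserve, the recursion (\ref{recursiverel}) is equivalent to the generating-function identity $\sum_{k\ge 0}a_k(\lambda)t^k=(1-\lambda t+t^2)^{-1}$; expanding $(1-\lambda t+t^2)^{-1}=\sum_{m\ge 0}t^m(\lambda-t)^m$ and applying the binomial theorem to $(\lambda-t)^m$, the coefficient of $t^k$ (obtained by setting $j=k-m$) is exactly the sum in (\ref{EE1}).

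The only genuine bookkeeping hazard — and the one spot where a careless write-up breaks down — is keeping the upper limits $\lfloor k/2\rfloor$, $\lfloor (k-1)/2\rfloor$, $\lfloor (k-2)/2\rfloor$ consistent in the even and odd cases for $k$; committing to the convention $\binom{m}{r}=0$ outside $0\le r\le m$ and extending all sums over $j\ge 0$ makes the parity distinction disappear and renders the Pascal step completely uniform. Everything else is routine polynomial algebra.
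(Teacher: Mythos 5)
Your proof is correct and follows essentially the same route as the paper's: induction on $k$, re-indexing the $a_{k-2}$ sum, and Pascal's rule $\binom{k-1-j}{j}+\binom{k-1-j}{j-1}=\binom{k-j}{j}$. The only difference is presentational — your convention $\binom{m}{r}=0$ outside $0\le r\le m$ neatly avoids the explicit even/odd case split on the upper limits that the paper carries out.
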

We remark that the polynomials $a_k=a_k(\lambda)$ in (\ref{EE1}) have only odd (even) powers of $\lambda$ if $k$ is odd (even). The proof of this lemma, which is by induction on $k$ is given in the appendix. 

Another way to characterize the polynomials $a_{k}=a_{k}(\lambda)$ is given by the following lemma. For $k=1,...,n-1$, we denote by $A_k$ the $k\times k$ matrix which is $A_1:=\begin{pmatrix}{0}\end{pmatrix}$ and for $k=2,...,n-1$, $A_k$ is the tridiagonal $k \times k$ matrices with $1$'s above and below the main diagonal and zero everywhere else. For example $A_3:=\begin{pmatrix} 0 & 1 & 0 \cr 1 & 0 & 1 \cr 0 & 1 & 0 \end{pmatrix}$. Also denote by $I_k$ the $k \times k$ identity and by $p_B=p_B(\lambda)$ the characteristic polynomial of a matrix  $B$.   
\begin{lemma}\label{epiphany}
For $k=1,...,n-1$,  $a_k(\lambda)=\det{(\lambda I_k+ A_k)}=(-1)^k p_{-A}(\lambda)$. 
\end{lemma}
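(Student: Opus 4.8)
The plan is to prove the two asserted identities separately and then note they are equivalent. First I would establish $a_k(\lambda)=\det(\lambda I_k + A_k)$ by expanding the determinant along the last row (or last column) of the tridiagonal matrix $\lambda I_k + A_k$. Laplace expansion along the last row produces exactly two terms: the diagonal entry $\lambda$ times the $(k-1)\times(k-1)$ minor (which is $\det(\lambda I_{k-1}+A_{k-1})$), and the off-diagonal $1$ in position $(k,k-1)$ times its cofactor; expanding that cofactor once more along its last column leaves $\det(\lambda I_{k-2}+A_{k-2})$ times a sign. Carefully tracking the signs, one gets the recursion $\det(\lambda I_k+A_k)=\lambda\det(\lambda I_{k-1}+A_{k-1})-\det(\lambda I_{k-2}+A_{k-2})$, which matches the recursion (\ref{recursiverel}) for $a_k$. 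Checking the base cases $\det(\lambda I_1 + A_1)=\lambda = a_1(\lambda)$ (note $a_1=\lambda a_0 - a_{-1}=\lambda$) and $\det(\lambda I_2+A_2)=\lambda^2-1=a_2(\lambda)$, and observing that the degenerate conventions $A_0$ empty (determinant $1=a_0$) make the recursion valid from $k=2$ onward, completes the induction. Alternatively one can bypass the induction on determinants entirely by invoking Lemma~\ref{expexpr}: since both $a_k(\lambda)$ and $\det(\lambda I_k+A_k)$ satisfy the same three-term recursion with the same initial data, they coincide, but writing out the determinant expansion is cleaner and self-contained.

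For the second equality, $\det(\lambda I_k + A_k)=(-1)^k p_{-A}(\lambda)$ where $p_{-A_k}(\lambda)=\det(\lambda I_k - (-A_k))=\det(\lambda I_k + A_k)$ — wait, this needs care. The characteristic polynomial of $B$ is $p_B(\lambda)=\det(\lambda I_k - B)$, so $p_{-A_k}(\lambda)=\det(\lambda I_k + A_k)$, which is literally $a_k(\lambda)$ with no sign at all. So the factor $(-1)^k$ in the statement must refer to $p_{A_k}$, not $p_{-A_k}$: indeed $p_{A_k}(\lambda)=\det(\lambda I_k - A_k)$, and since $A_k$ is similar to $-A_k$ via the diagonal sign matrix $\mathrm{diag}(1,-1,1,-1,\dots)$ (which conjugates the tridiagonal $A_k$ to $-A_k$), we have $p_{A_k}=p_{-A_k}$, hence $a_k(\lambda)=\det(\lambda I_k+A_k)=p_{-A_k}(\lambda)=p_{A_k}(\lambda)$ — so actually there should be no sign, OR the relation $a_k(\lambda)=(-1)^kp_{-A}(-\lambda)$ is intended with an argument flip. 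Since $a_k$ has parity $(-1)^k$ (remarked after Lemma~\ref{expexpr}), $(-1)^k a_k(-\lambda)=a_k(\lambda)$, so whichever normalization the authors chose, the identity reduces to this parity statement plus the determinant computation. I would state the determinant identity as the core content and then reconcile the characteristic-polynomial form by (i) the similarity $A_k\sim -A_k$ under the alternating sign conjugation and (ii) the parity of $a_k$.

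The main obstacle I expect is purely bookkeeping: getting the signs in the Laplace cofactor expansion exactly right so that the minus sign (not a plus) appears in front of the $a_{k-2}$ term, and then matching the various sign conventions for characteristic polynomials ($\det(\lambda I - B)$ versus $\det(B-\lambda I)$) against the $(-1)^k$ in the statement. Neither is conceptually hard, but the paper's phrasing ``$(-1)^k p_{-A}(\lambda)$'' is the kind of thing that only comes out right if one is scrupulous about the parity remark immediately preceding the lemma; I would lean on that remark explicitly rather than re-deriving it.

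Concretely, the proof would run: (1) by induction on $k$ using cofactor expansion, $\det(\lambda I_k + A_k)$ satisfies the recursion (\ref{recursiverel}), with base cases $\det(\lambda I_1+A_1)=\lambda=a_1$ and $\det(\lambda I_2 + A_2)=\lambda^2-1=a_2$ — equivalently, cite Lemma~\ref{expexpr} and uniqueness of sequences satisfying a linear recursion with given initial values — hence $a_k(\lambda)=\det(\lambda I_k+A_k)$; (2) observe $\det(\lambda I_k + A_k)=\det(\lambda I_k-(-A_k))=p_{-A_k}(\lambda)$; (3) use the parity $a_k(-\lambda)=(-1)^k a_k(\lambda)$ from the remark after Lemma~\ref{expexpr} to rewrite this in the form $(-1)^k p_{-A_k}(\lambda)$ as stated (or equivalently note $A_k\sim -A_k$). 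That closes the argument.
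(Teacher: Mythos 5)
Your proof is correct and follows essentially the same route as the paper: the paper also establishes $\det(\lambda I_k+A_k)=\lambda a_{k-1}-a_{k-2}$ by cofactor expansion (along the first column rather than the last row) together with the base cases $k=1,2$, and it does not separately argue the characteristic-polynomial identity. On the sign question that occupies your second paragraph, the cleanest reconciliation is simply that the paper is using the convention $p_B(\lambda)=\det(B-\lambda I)$, so that $(-1)^k p_{-A_k}(\lambda)=(-1)^k\det(-A_k-\lambda I)=\det(\lambda I_k+A_k)$ with no appeal to the parity of $a_k$ or to the similarity $A_k\sim -A_k$; your alternative reconciliations are valid but unnecessary.
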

\begin{proof}
By induction on $k$, we see that the result is true for $k=1$ and $k=2$. Assuming the result true for $k-1$ and $k-2$, we calculate the determinant of 
$\lambda I_k+ A_k$ along the first column. This gives 
$$
\det (\lambda I_k +A_k)=\lambda \det(\lambda I_{k-1}+A_{k-1})-\det \left( \begin{pmatrix} 1 & 0 \qquad \cdots \qquad 0 \cr * & \lambda I_{k-2} + A_{k-2} \end{pmatrix} \right)=
\lambda a_{k-1}-a_{k-2}, 
$$
using the inductive assumption. This coincides with  the recursive definition (\ref{recursiverel}). 
\end{proof} 
In the following we shall be interested in the roots of the polynomial  $a_{n-1}=a_{n-1}(\lambda)$, i.e., (\ref{EE1}) with $k=n-1$ which are the eigenvalues of the matrix $-A_{n-1}$ discussed in the above lemma. We remark that  such roots/eigenvalues are all {\it real} since $A_{n-1}$ is symmetric.  Furthermore if $\lambda$ is an eigenvalue of $-A_{n-1}$, if we define $\vec a:=[a_0, a_1(\lambda),a_2(\lambda),...,a_{n-2}(\lambda)]^T$, the relations (\ref{recursiverel}) give $A_{n-1} \vec a=\lambda \vec a$, that is, $\lambda$ is an eigenvalue of $A_{n-1}$ also,  with eigenvector $\vec a$. The viceversa is also true since starting from the eigenvalue equation $A_{n-1} \vec a=\lambda \vec a$, we notice that the first component of $\vec a$ must be nonzero (otherwise the whole eigenvector $\vec a$ would be zero) and normalizing it to $1$ one finds that $\lambda$ must satisfy $a_{n-1}(\lambda)=0$, that is, it is an eigenvalue of $-A_{n-1}$. Since $A_{n-1}$ and $-A_{n-1}$ have the same eigenvalues, the eigenvalues of $A_{n-1}$ come in pairs $\pm \lambda$ except for the simple $0$ eigenvalue which occurs in the case where $n$ is even. The  algebraic multiplicity of each eigenvalue (the multiplicity of each root of $a_{n-1}$) is equal to the maximum number of linearly independent eigenvectors  since symmetric matrices are diagonalizable. However if $\vec a$ is an eigenvector of $A_{n-1}$ with eigenvalue $ \lambda$, the relations (\ref{recursiverel})  imply that two eigenvectors corresponding to the same eigenvalue $ \lambda$ have to be proportional to each-other. Thus $a_{n-1}=a_{n-1}(\lambda)$ has $n-1$ {\it distinct} roots. We shall not need the exact values of these roots to continue our theory. However an estimate will be useful. In particular,  we notice that for any root $\bar \lambda$,  
\begin{equation} \label{estimate3}
|\bar \lambda | <2.
\end{equation}
 To see this, notice that because of the symmetry about $0$, it is enough to show that $\bar \lambda < 2$. By an induction argument  if $\bar \lambda =2$ from (\ref{recursiverel}) it follows that $a_k=k+1$ and therefore $a_{n-1}=n\not=0$. If $\bar \lambda > 2$, then, again by induction we can show that $a_k$ is increasing with $k$, since $a_1=\bar \lambda > 1=a_0$ and 
$a_k=\bar \lambda a_{k-1}-a_{k-2} \leftrightarrow a_k-a_{k-1}=(\bar \lambda -1)a_{k-1} -a_{k-2}$ and  $(\bar \lambda -1)a_{k-1} -a_{k-2}>(\bar \lambda -1)a_{k-1} -a_{k-1}=(\bar \lambda -2)a_{k-1}>0$.

\vspace{0.25cm}

The reason for introducing the  polynomials $a_k$, $k=-1,0,1,...,n-1$ in the analysis of the dynamical Lie algebra ${\cal L}$ 
is explained in the following proposition. 
\begin{proposition}\label{keyprop}
Let $\bar \lambda$ be a real root  of the polynomial $a_{n-1}=a_{n-1}(\lambda)$, that is, (cf. Lemma \ref{epiphany}) an eigenvalue of the matrix $A_{n-1}$. Then the three elements 
\begin{equation}\label{Xhatted}
\hat X_{\bar \lambda}:= a_0 ({\bf X} + {\bf Z}^1)+ a_{n-2}(\bar \lambda) ( {\bf XX}-{\bf Y}^{n-3})+ \sum_{k=1}^{n-3} a_k(\bar \lambda) ({\bf Z}^{k+1}- {\bf Y}^{k-1}),  
\end{equation}
\begin{equation}\label{Yhatted}
\hat Y_{\bar \lambda} :=\sum_{k=0}^{n-2} a_k(\bar \lambda) \left( {\bf Y}^k-{\bf Z}^k \right),
\end{equation}
\begin{equation}\label{Zhatted}
\hat Z_{\bar \lambda}:=\sum_{k=0}^{n-2} a_k(\bar \lambda) {\bf YZ}^k, 
\end{equation}
span an \em{ideal} in the dynamical Lie algebra ${\cal L}$.\footnote{Notice that an easy verification shows that each of these ideal is orthogonal to the center described in Proposition \ref{centerchar}.} 
\end{proposition}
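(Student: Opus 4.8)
The plan is to show directly that $V:=\mathrm{span}\{\hat X_{\bar\lambda},\hat Y_{\bar\lambda},\hat Z_{\bar\lambda}\}$ satisfies $[V,{\cal L}]\subseteq V$. Since each of the three elements is a linear combination of the basis of Theorem \ref{DLAchar} we have $V\subseteq{\cal L}$, and then $[V,V]\subseteq[V,{\cal L}]\subseteq V$ makes $V$ automatically a subalgebra, so $[V,{\cal L}]\subseteq V$ is exactly the assertion that $V$ is an ideal. To establish this it suffices to check $[V,{\bf X}]\subseteq V$ and $[V,{\bf Z}^0]\subseteq V$ and then bootstrap: the set $W:=\{a\in{\cal L}:[V,a]\subseteq V\}$ is a linear subspace, and for $a,b\in W$ and $v\in V$ the Jacobi identity gives $[v,[a,b]]=[[v,a],b]-[[v,b],a]\in V$ because $[v,a],[v,b]\in V$ and $a,b\in W$; hence $W$ is a Lie subalgebra of ${\cal L}$, and since it contains the generators ${\bf X}$ and ${\bf Z}^0$ it equals all of ${\cal L}$.

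So the substance is the six brackets $[\hat X_{\bar\lambda},{\bf X}]$, $[\hat X_{\bar\lambda},{\bf Z}^0]$, $[\hat Y_{\bar\lambda},{\bf X}]$, $[\hat Y_{\bar\lambda},{\bf Z}^0]$, $[\hat Z_{\bar\lambda},{\bf X}]$, $[\hat Z_{\bar\lambda},{\bf Z}^0]$, each evaluated by expanding in the basis, applying Table I term by term, and re-indexing the resulting sums. Four of them are mechanical and need only linearity of the bracket and the rows of Table I: from $[{\bf YZ}^k,{\bf X}]=-4{\bf Z}^k+4{\bf Y}^k$ one reads off $[\hat Z_{\bar\lambda},{\bf X}]=4\hat Y_{\bar\lambda}$; from $[{\bf Y}^k,{\bf X}]=-2{\bf YZ}^k$ and $[{\bf Z}^k,{\bf X}]=2{\bf YZ}^k$ one gets $[\hat Y_{\bar\lambda},{\bf X}]=-4\hat Z_{\bar\lambda}$; the row $[{\bf YZ}^k,{\bf Z}^0]$ gives $[\hat Z_{\bar\lambda},{\bf Z}^0]=4\hat X_{\bar\lambda}$ once one observes that its $k=0$ and $k=n-2$ Kronecker terms produce precisely the ${\bf X}$ and ${\bf XX}$ summands that appear in the definition (\ref{Xhatted}) of $\hat X_{\bar\lambda}$; and $[\hat X_{\bar\lambda},{\bf Z}^0]=-4\hat Z_{\bar\lambda}$ collapses similarly, using $a_{-1}=0$ to absorb the boundary term at index $0$.

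The two remaining brackets, $[\hat Y_{\bar\lambda},{\bf Z}^0]$ and $[\hat X_{\bar\lambda},{\bf X}]$, are where the recursion (\ref{recursiverel}) and, decisively, the hypothesis $a_{n-1}(\bar\lambda)=0$ enter; matching the two endpoint terms is the only delicate step in the whole argument. Expanding $[\hat Y_{\bar\lambda},{\bf Z}^0]$ with Table I and re-indexing gives $2\sum_{j=0}^{n-2}c_j\,{\bf YZ}^j$ with $c_j=a_{j-1}(\bar\lambda)+a_{j+1}(\bar\lambda)$ for $0\le j\le n-3$ (reading $a_{-1}=0$) and $c_{n-2}=a_{n-3}(\bar\lambda)$. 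For $j\le n-3$ the recursion (\ref{recursiverel}), rewritten as $a_{j-1}+a_{j+1}=\lambda a_j$, gives $c_j=\bar\lambda\,a_j(\bar\lambda)$; and the $k=n-1$ instance of (\ref{recursiverel}) reads $a_{n-3}(\bar\lambda)=\bar\lambda\,a_{n-2}(\bar\lambda)-a_{n-1}(\bar\lambda)$, so $a_{n-1}(\bar\lambda)=0$ is exactly what forces $c_{n-2}=\bar\lambda\,a_{n-2}(\bar\lambda)$ into the same pattern. Hence $[\hat Y_{\bar\lambda},{\bf Z}^0]=2\bar\lambda\,\hat Z_{\bar\lambda}$, and since the expansion of $[\hat X_{\bar\lambda},{\bf X}]$ produces literally the same combination of ${\bf YZ}$-strings, also $[\hat X_{\bar\lambda},{\bf X}]=2\bar\lambda\,\hat Z_{\bar\lambda}$. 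All six brackets therefore lie in $V$, so $[V,{\bf X}],[V,{\bf Z}^0]\subseteq V$ and $V$ is an ideal. Finally, the orthogonality to the center recorded in the footnote follows by pairing $\hat X_{\bar\lambda},\hat Y_{\bar\lambda},\hat Z_{\bar\lambda}$ against the generators listed in Proposition \ref{centerchar}, using that the basis of Theorem \ref{DLAchar} is orthogonal: each inner product collapses — the bulk terms cancel in pairs and the surviving endpoint terms vanish using $a_0=1$ and (\ref{recursiverel}).
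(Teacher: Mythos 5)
Your proof is correct and follows essentially the same route as the paper: reduce the ideal property to the two inclusions $[V,{\bf X}]\subseteq V$ and $[V,{\bf Z}^0]\subseteq V$, then evaluate the six brackets via Table I, with the recursion (\ref{recursiverel}) and the hypothesis $a_{n-1}(\bar\lambda)=0$ entering exactly where you say they do; all six values ($2\bar\lambda\hat Z$, $-4\hat Z$, $4\hat Y$, $-4\hat Z$, $2\bar\lambda\hat Z$, $4\hat X$) agree with the paper's. The only addition is your explicit Jacobi-identity argument that checking against the generators suffices, a step the paper asserts without proof.
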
 
In the following we shall denote by ${\cal I}_{\bar \lambda}$ the ideal corresponding to the root $\bar \lambda$ of $a_{n-1}$.
\begin{proof}
Since ${\bf X}$ and ${\bf Z}^0$ are generators of the Lie algebra ${\cal L}$, it is enough to show the two inclusions 
\begin{equation}\label{inclus2}
\left[ {\cal I}_{\bar \lambda}, {\bf X} \right] \subseteq {\cal I}_{\bar \lambda}, \qquad 
\left[ {\cal I}_{\bar \lambda}, {\bf Z}^0 \right] \subseteq {\cal I}_{\bar \lambda}. 
\end{equation}
To show the first one, we calculate, using Table I  (for simplicity of notation we omit the dependence on $\bar \lambda$)
$$
[\hat X, {\bf X}]=a_0 [{\bf Z}^1, {\bf X}]-a_{n-2}[{\bf Y}^{n-3}, {\bf X}]+ \sum_{k=1}^{n-3} a_k \left( [ {\bf Z}^{k+1}, {\bf X} ]- [{\bf Y}^{k-1}, {\bf X}] \right)= 
$$
$$
2a_0{\bf YZ}^1+2 a_{n-2} {\bf YZ}^{n-3}+2 \sum_{k=1}^{n-3} a_k ( {\bf YZ}^{k+1}+ {\bf YZ}^{k-1})=
$$
$$
2 a_1{\bf YZ}^0+\sum_{k=1}^{n-3} ( a_{k-1}+ a_{k+1}) {\bf YZ}^{k}+ 2 a_{n-3} {\bf YZ}^{n-2}= 2 \bar \lambda \hat Z,
$$
where in the last equality, we used (\ref{Zhatted}) (\ref{recursiverel}) and $a_{n-1}(\bar \lambda)=0$. Analogously, we obtain 
$$
[\hat Y, {\bf X}]=-4 \hat Z, \qquad [\hat Z, {\bf X}]=4 \hat Y. 
$$
For $[{\cal I}, {\bf Z}^0]$, we calculate 
$$
[\hat X, {\bf Z}^0]=a_0 \left( [ {\bf X}, {\bf Z}^0]+ [{\bf Z}^1, {\bf Z}^0] \right)+ 
a_{n-2} \left( [{\bf XX}, {\bf Z}^0]- [{\bf Y}^{n-3}, {\bf Z}^0]\right)+
\sum_{k=1}^{n-3} a_k \left( [{\bf Z}^{k+1}, {\bf Z}^0] - [{\bf Y}^{k-1}, {\bf Z}^0] \right)=
$$
$$
-4\sum_{k=0}^{n-2} a_k {\bf YZ}^k=-4 \hat Z.
$$
Also 
$$
[\hat { Y}, {\bf Z}^0]=\sum_{k=0}^{n-2} a_k \left( [{\bf Y}^k, {\bf Z}^0 - [{\bf Z}^k, {\bf Z}^0]\right)= 2 \sum_{k=0}^{n-3} a_k {\bf YZ}^{k+1} + 2 \sum_{k=1}^{n-2} a_k {\bf YZ}^{k-1}=
$$
$$
2 \left( a_1 {\bf YZ}^0+ \sum_{k=1}^{n-3} ( a_{k-1} + a_{k+1}) {\bf YZ}^k + a_{n-3} {\bf YZ}^{n-2} \right)= 2\bar \lambda \hat Z, 
$$
and 
$$
[\hat Z, {\bf Z}^0]=a_0 [{\bf YZ}^0, {\bf Z}^0] +\sum_{k=1}^{n-3} a_k [ {\bf YZ}^k, {\bf Z}^0] +a_{n-2} [{\bf YZ}^{n-2}, {\bf Z}^0]=
$$
$$
4 \left(a_0 ({\bf X}+{\bf Z}^1)+ \sum_{k=1}^{n-3} a_{k}({\bf Z}^{k+1}- {\bf Y}^{k-1} )+ a_{n-2} ({\bf XX}-{\bf Y}^{n-3}) \right)=4 \hat X. 
$$

\end{proof} 
The ideal ${\cal I}_{\bar \lambda}$ belongs to the semisimple subalgebra of ${\cal L}$. It is in fact semisimple since it 
must be reductive and if it had nontrivial center it would have a nonzero intersection with  the center which is 
however orthogonal to ${\cal I}_{\bar \lambda}$. By a dimension argument, it is also easily seen that ${\cal I}_{\lambda}$ is 
in fact {\it simple} since any one dimensional ideal would belong to the center. More in detail,  ${\cal I}_{\bar \lambda}$ is 
isomorphic to $su(2)$. Without resorting to the classification theory for Lie algebras (see, e.g., \cite{Ball}), we can explicitly display an isomorphism based on the above calculated Lie brackets.  {Such an isomorphism may be  useful in constructive control problems. For example, if one uses a control algorithm based on Lie groups decompositions (see, e.g., \cite{Mikobook}), it might be useful to know which matrix corresponds to $i \sigma_{x,y,z}$ for the Pauli matrices in \ref{PauliMat} (for example if using an Euler-type decomposition.) 

In order to obtain such an isomorphism,  we recall the inner product defined in $su(N)$ for general $N$, $\langle A, B\rangle:=\gamma Tr(AB^{\dagger})$ for a given $\gamma  > 0$. In order to have the matrices ${\bf X}$, ${\bf XX}$,  ${\bf Y}^j$, ${\bf Z}^j$, $j=0,...,n-2$ defined in (\ref{Ys}), (\ref{Zs}), (\ref{Xs}) with unit norm, we choose $\gamma=\frac{1}{\sqrt{n 2^n}}$. With this choice ${\bf YZ}^j$, $j=0,1,...,n-2$,  have norm $\sqrt{2}$, and the basis described in Theorem \ref{DLAchar} is  orthogonal. In the following for simplicity of notation, we omit the reference to $\bar \lambda$. Let us rewrite $\hat X$ and $\hat Y$ in (\ref{Xhatted}) and (\ref{Yhatted}) as  
\begin{equation}\label{Newformulas}
\hat X:=\hat A +\hat B, \quad \hat Y=\hat C+\hat D, 
\end{equation}
with $\hat A$, $\hat B$, $\hat C$, $\hat D$, defined as 
\begin{equation}\label{defABCD}
\hat A:= a_0{\bf X}+a_{n-2}{\bf XX}, \qquad  \hat C:=a_{n-2} {\bf Y}^{n-2}- a_0 {\bf Z}^0, 
\end{equation}
$$
\hat B:=\sum_{k=1}^{n-2} a_{k-1} {\bf Z}^k- \sum_{k=0}^{n-3} a_{k+1} {\bf Y}^k, \qquad  \hat D=\sum_{k=0}^{n-3} a_k {\bf Y}^k-\sum_{k=1}^{n-2} a_k {\bf Z}^k.
$$
We notice that all these matrices are orthogonal to each other except for $\hat B$ and $\hat D$ which are such that $\langle \hat B, \hat D \rangle=-2\sum_{k=1}^{n-2} a_k a_{k-1}$. Furthermore $\|\hat A\|=\|\hat C\|$ and $\|\hat B\|=\|\hat D\|$. In these  cases an orthogonal basis for $\texttt{span} \{\hat X, \hat Y\}$ can be obtained as $\{ \hat X+\hat Y,  \hat X -\hat Y\}$. Therefore we define two new matrices in the ideal, which are a basis of $\texttt{span} \{ \hat X, \hat Y\}$ as 
\begin{equation}\label{tildasX}
\tilde S_x=\hat X+\hat Y=(\hat A + \hat C) + (\hat B + \hat D)= a_0 ({\bf X} -{\bf Z}^0)+ a_{n-2}({\bf XX} +{\bf Y}^{n-2})+ \sum_{k=1}^{n-2}(a_{k-1}-a_{k}) {\bf Z}^k + \sum_{k=0}^{n-3} (a_k-a_{k+1}) {\bf Y}^k
\end{equation}
\begin{equation}\label{tildasY}
\tilde S_y=\hat X - \hat Y= (\hat A-\hat C)+(\hat B - \hat D)=a_0({\bf X}+{\bf Z}^0)+a_{n-2}({\bf XX}-{\bf Y}^{n-2})+ \sum_{k=1}^{n-2}(a_{k-1}+a_k) {\bf Z}^k -\sum_{k=0}^{n-3} (a_{k+1}+a_{k}){\bf Y}^k. 
\end{equation}
It is useful to calculate the norms of $\tilde S_x$ and $\tilde S_y$ as $\| \tilde S_x\|^2=2\sum_{k=0}^{n-1}(a_k-a_{k-1})^2$,
 $\|\tilde S_y\|^2=2\sum_{k=0}^{n-1}(a_k+a_{k-1})^2$ (recall $a_{-1}=a_{n-1}=0$). 
With these definitions, we have the isomorphism described in the following proposition whose  proof is given in the appendix.
\begin{proposition}\label{su2} Denote by $\bar \lambda$ a root of the polynomial $a_{n-1}=a_{n-1}(\lambda)$. Define the basis of ${\cal I}_{\bar \lambda}$, 
\begin{equation}\label{SxSySz7}
S_x:=-\frac{1}{2\sqrt{2-\bar \lambda} \|\tilde S_x \| \|\tilde S_y\|} \tilde S_x,  \quad 
S_y:=\frac{1}{2\sqrt{2- \bar \lambda}\|\tilde S_y\|^2} \tilde S_y, \qquad  S_z:=\frac{1}{2 \|\tilde S_x\|\|\tilde S_y\|} \hat Z.
\end{equation}
Then $S_{x,y,z}$  satisfy the commutation relation 
\begin{equation}\label{commureln}
[S_x,S_y]=S_z, \qquad [S_y, S_z]=S_x, \qquad [S_z,S_x]=S_y. 
\end{equation}
Therefore  the ideal ${\cal I}_{\bar \lambda}$ is isomorphic to $su(2)$. 
\end{proposition}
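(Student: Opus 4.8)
The plan is to verify the commutation relations (\ref{commureln}) directly by combining three ingredients already assembled in the excerpt: the Lie brackets $[\hat X,{\bf X}]=2\bar\lambda\hat Z$, $[\hat Y,{\bf X}]=-4\hat Z$, $[\hat Z,{\bf X}]=4\hat Y$, $[\hat X,{\bf Z}^0]=-4\hat Z$, $[\hat Y,{\bf Z}^0]=2\bar\lambda\hat Z$, $[\hat Z,{\bf Z}^0]=4\hat X$ computed in the proof of Proposition \ref{keyprop}; the linear relations $\tilde S_x=\hat X+\hat Y$ and $\tilde S_y=\hat X-\hat Y$ together with the identity $\hat Z=\tfrac12[\hat Z,{\bf Z}^0]/2 \cdots$ — more to the point, the fact that $\hat X$ and $\hat Y$ themselves arise as brackets of ${\bf Z}^0,{\bf X},\hat Z$; and the norm formulas $\|\tilde S_x\|^2=2\sum(a_k-a_{k-1})^2$, $\|\tilde S_y\|^2=2\sum(a_k+a_{k-1})^2$. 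First I would compute $[\tilde S_x,\tilde S_y]$. Since $\tilde S_x=\hat X+\hat Y$ and $\tilde S_y=\hat X-\hat Y$, bilinearity and antisymmetry give $[\tilde S_x,\tilde S_y]=[\hat X+\hat Y,\hat X-\hat Y]=-2[\hat X,\hat Y]$, so I need $[\hat X,\hat Y]$. I would obtain this by writing, say, $\hat Y=\tfrac14[\hat Z,{\bf X}]$ (from $[\hat Z,{\bf X}]=4\hat Y$) and using the Jacobi identity: $[\hat X,\hat Y]=\tfrac14[\hat X,[\hat Z,{\bf X}]]=\tfrac14\big([[\hat X,\hat Z],{\bf X}]+[\hat Z,[\hat X,{\bf X}]]\big)$. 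Each inner bracket is now known ($[\hat X,\hat Z]=-[\hat Z,\hat X]$, and $[\hat X,\hat Z]$ must itself be computed — alternatively use $\hat X=\tfrac14[\hat Z,{\bf Z}^0]$ and expand symmetrically), reducing everything to the six brackets above plus $[{\bf X},\hat Z]$, $[{\bf Z}^0,\hat Z]$ which are listed.

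Carrying this through, I expect $[\hat X,\hat Y]$ to come out proportional to $\hat Z$ with a coefficient built from $\bar\lambda$; call it $c\,\hat Z$. Then $[\tilde S_x,\tilde S_y]=-2c\,\hat Z$. Dividing by the normalizations in (\ref{SxSySz7}), $[S_x,S_y]=\big({-1}/(2\sqrt{2-\bar\lambda}\|\tilde S_x\|\|\tilde S_y\|)\big)\big(1/(2\sqrt{2-\bar\lambda}\|\tilde S_y\|^2)\big)\,[\tilde S_x,\tilde S_y]$, and I want this to equal $S_z=\hat Z/(2\|\tilde S_x\|\|\tilde S_y\|)$. Matching gives the consistency condition that $-2c$ times $-1/(4(2-\bar\lambda)\|\tilde S_y\|^2)$ equals $1/2$, i.e. $c=(2-\bar\lambda)\|\tilde S_y\|^2$ up to a harmless sign. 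The crux of the proof is therefore the identity $\|\tilde S_y\|^2 = 2\sum_{k=0}^{n-1}(a_k+a_{k-1})^2$ being exactly what is needed; this should follow because $\sum(a_k+a_{k-1})^2$ telescopes against the recursion $a_k=\bar\lambda a_{k-1}-a_{k-2}$, which after summing by parts and using $a_{-1}=a_{n-1}=0$ produces a factor of $(2-\bar\lambda)$ or $(2+\bar\lambda)$. Likewise the other two relations $[S_y,S_z]=S_x$ and $[S_z,S_x]=S_y$ reduce, via $[\tilde S_y,\hat Z]$ and $[\hat Z,\tilde S_x]$, to brackets of $\hat X,\hat Y,\hat Z$ with each other, and the normalizations in (\ref{SxSySz7}) have been rigged precisely so that the numerical factors match; the asymmetry between $S_x$ (with $\|\tilde S_x\|\|\tilde S_y\|$) and $S_y$ (with $\|\tilde S_y\|^2$) is there to absorb the fact that $\tilde S_x$ and $\tilde S_y$ need not have equal norms.

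Concretely the step order is: (1) compute $[\hat X,\hat Z]$ and $[\hat Y,\hat Z]$ in terms of $\hat X,\hat Y$ using Jacobi and Table I–derived brackets; (2) compute $[\hat X,\hat Y]$ similarly, getting it proportional to $\hat Z$; (3) translate into $[\tilde S_x,\tilde S_y]$, $[\tilde S_y,\hat Z]$, $[\hat Z,\tilde S_x]$; (4) establish the two scalar identities $\sum_{k=0}^{n-1}(a_k-a_{k-1})^2 = (2-\bar\lambda)\cdot(\text{something})$ and $\sum_{k=0}^{n-1}(a_k+a_{k-1})^2=(2+\bar\lambda)\cdot(\text{something})$, or more precisely whatever relation between $\|\tilde S_x\|^2$, $\|\tilde S_y\|^2$ and $(2\pm\bar\lambda)$ makes the three normalized brackets close; (5) assemble. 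I expect the main obstacle to be step (4): showing that the summation-by-parts of $\sum(a_k\pm a_{k-1})^2$ against the three-term recursion, with the boundary terms $a_{-1}=a_{n-1}=0$, yields exactly the factor $(2\mp\bar\lambda)$ needed — in particular checking that no leftover boundary contribution survives and that the sign of $2-\bar\lambda$ (positive by the estimate $|\bar\lambda|<2$ in (\ref{estimate3}), which is why the $\sqrt{2-\bar\lambda}$ in (\ref{SxSySz7}) is real) is handled correctly. Once that scalar identity is in hand, the rest is bookkeeping, and the conclusion that ${\cal I}_{\bar\lambda}\cong su(2)$ is immediate from (\ref{commureln}) since those are exactly the structure constants of $su(2)$ in the basis $iS_x,iS_y,iS_z$.
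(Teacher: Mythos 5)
Your overall architecture (establish the three brackets among $\tilde S_x$, $\tilde S_y$, $\hat Z$, then check that the normalizations in (\ref{SxSySz7}) turn them into (\ref{commureln})) is the same as the paper's, but the step on which everything hinges --- actually evaluating one bracket of two elements of ${\cal I}_{\bar\lambda}$ --- does not go through as you describe. The Jacobi identities relating $[\hat X,\hat Y]$, $[\hat Y,\hat Z]$, $[\hat Z,\hat X]$ to the known brackets with the generators ${\bf X}$ and ${\bf Z}^0$ form a system that is linear and \emph{homogeneous} in these three unknown brackets: they say precisely that the bracket $\Lambda^2{\cal I}_{\bar\lambda}\to{\cal I}_{\bar\lambda}$ is equivariant for the adjoint action of the generators, and by a Schur-type argument the space of such equivariant maps is one-dimensional. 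So Jacobi alone determines $[\hat X,\hat Y]$, $[\hat Y,\hat Z]$, $[\hat Z,\hat X]$ only up to a single overall scalar (which could a priori even be zero), and that scalar is exactly the quantity the proposition is about. Your suggested way of closing the loop --- substituting $\hat X=\tfrac14[\hat Z,{\bf Z}^0]$ and ``expanding symmetrically'' --- is circular: $[\hat Z,\hat X]=\tfrac14[\hat Z,[\hat Z,{\bf Z}^0]]$, and Jacobi applied to this expression returns the same unknown on both sides. At least one bracket of two ideal elements must be computed by a means external to the Jacobi bookkeeping.

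The device the paper uses for this, and which your proposal is missing, is an orthogonal projection onto the ideal: since ${\cal L}$ is reductive and ${\cal I}_{\bar\lambda}$ is one of its simple ideals, the component of any element of ${\cal L}$ orthogonal to ${\cal I}_{\bar\lambda}$ (lying in the center plus the other ideals) \emph{commutes} with ${\cal I}_{\bar\lambda}$. From (\ref{tildasY}) one reads off $\langle {\bf X}+{\bf Z}^0,\tilde S_y\rangle$ and hence ${\bf X}+{\bf Z}^0=\tfrac{2a_0}{\|\tilde S_y\|^2}\tilde S_y+({\bf X}+{\bf Z}^0)_\perp$, so that
\begin{equation*}
[\tilde S_x,\tilde S_y]=\frac{\|\tilde S_y\|^2}{2}\,[\tilde S_x,{\bf X}+{\bf Z}^0],
\end{equation*}
and the right-hand side is a bracket with generators, computable term by term from Table~I; the recursion (\ref{recursiverel}) together with $a_{n-1}(\bar\lambda)=0$ then collapses it to $-(4-2\bar\lambda)\|\tilde S_y\|^2\hat Z$, and similarly for $[\tilde S_y,\hat Z]$ and $[\hat Z,\tilde S_x]$ (or one can brute-force a full commutator table of the basis of Theorem~\ref{DLAchar}, which is exactly what this trick avoids). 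A secondary point: once the three brackets are in hand with the norms $\|\tilde S_x\|^2$, $\|\tilde S_y\|^2$ appearing as coefficients, the verification of (\ref{commureln}) is pure arithmetic; the separate scalar identity $\sum_k(a_k\pm a_{k-1})^2=(2\pm\bar\lambda)\sum_k a_k^2$ that you flag as the crux in your step (4) is true (it follows from $a_{k-1}+a_{k+1}=\bar\lambda a_k$ and the boundary conditions) but is not needed.
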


}

Consider now two {\it different} roots of the polynomial $a_{n-1}=a_{n-1}(\lambda)$, $\bar \lambda_1$ and $\bar \lambda_2$. The two associated ideals ${\cal I}_{\bar \lambda_1}$ and ${\cal I}_{\bar \lambda_2}$ have zero intersection or they coincide (because an intersection  of smaller dimension would mean that there is a center of dimension higher than two). However, if  ${\cal I}_{\bar \lambda_1}={\cal I}_{\bar \lambda_2}$ we should have 
$\hat Z_{\bar \lambda_1}=k \hat Z_{\bar \lambda_2}$ for $k\not=0$ which implies, by comparing the first two entries in (\ref{Zhatted}) and using $a_0$ and $a_1$ from (\ref{recursiverel}),  that $\bar \lambda_1=\bar \lambda_2$. Therefore distinct real roots correspond to  distinct ideals. Thus, we have
\begin{proposition}\label{distinct}
Distinct real roots $\bar \lambda_1$ and $\bar \lambda_2$ of the polynomial $a_{n-1}=a_{n-1}(\lambda)$ defined in (\ref{EE1}) 
 correspond to two disjoint  simple ideals each isomorphic to $su(2)$, spanned by $\hat X$, $\hat Y$ and $\hat Z$ defined in (\ref{Xhatted}), (\ref{Yhatted}) and (\ref{Zhatted}) (or $S_x,$ $S_y$, $S_z$ defined in (\ref{SxSySz7})), with $\bar \lambda=\bar \lambda_1$ or  $\bar \lambda=\bar \lambda_2$, respectively. The dynamical Lie algebra ${\cal L}$ contains therefore $n-1$ simple ideals, each corresponding to one root of $a_{n-1}$.  
\end{proposition}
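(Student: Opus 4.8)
The plan is to combine two ingredients that are already established. First, by Proposition~\ref{keyprop} every real root $\bar\lambda$ of $a_{n-1}$ produces an ideal $\mathcal{I}_{\bar\lambda}=\mathrm{span}\{\hat X_{\bar\lambda},\hat Y_{\bar\lambda},\hat Z_{\bar\lambda}\}$ of $\mathcal{L}$, which by Proposition~\ref{su2} is isomorphic to $su(2)$ and hence three-dimensional and simple. Second, the discussion after Lemma~\ref{epiphany} shows that $a_{n-1}$ has exactly $n-1$ \emph{distinct} real roots. What remains is to check (i) that the ideals attached to two distinct roots are either equal or meet only in $0$, and (ii) that distinct roots actually give distinct ideals; once these hold, the $n-1$ roots index $n-1$ pairwise-disjoint simple ideals, which is the claim. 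I would carry out (i) first, then (ii).

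For step (i), take distinct roots $\bar\lambda_1,\bar\lambda_2$ with ideals $\mathcal{I}_1,\mathcal{I}_2$ and consider $\mathcal{I}_1\cap\mathcal{I}_2$. As the intersection of two ideals of $\mathcal{L}$ it is itself an ideal of $\mathcal{L}$, hence also an ideal of the simple Lie algebra $\mathcal{I}_1$ and of $\mathcal{I}_2$; simplicity then forces $\mathcal{I}_1\cap\mathcal{I}_2$ to equal $0$ or $\mathcal{I}_1$, and likewise $0$ or $\mathcal{I}_2$, so either the intersection is trivial or $\mathcal{I}_1=\mathcal{I}_2$. (Equivalently, as remarked in the text: a common sub-ideal of dimension $1$ or $2$ would be central --- too small to be semisimple --- contradicting the orthogonality of each $\mathcal{I}_{\bar\lambda}$ to the center found in Proposition~\ref{centerchar}.)

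For step (ii) --- the part I expect to be the main obstacle, since \emph{a priori} Proposition~\ref{keyprop} does not rule out two different roots generating the same subspace --- the idea is to recover the root $\bar\lambda$ from the ideal $\mathcal{I}_{\bar\lambda}$ as an invariant, namely as the line $\mathcal{I}_{\bar\lambda}\cap V$ where $V:=\mathrm{span}\{\mathbf{YZ}^0,\dots,\mathbf{YZ}^{n-2}\}$. Indeed, reading off the expansions (\ref{Xhatted})--(\ref{Zhatted}) in the orthogonal basis of Theorem~\ref{DLAchar}, the elements $\hat X_{\bar\lambda}$ and $\hat Y_{\bar\lambda}$ have no components along any $\mathbf{YZ}^k$, while $\hat Z_{\bar\lambda}\in V$; since $\{\hat X_{\bar\lambda},\hat Y_{\bar\lambda},\hat Z_{\bar\lambda}\}$ is a basis of $\mathcal{I}_{\bar\lambda}$ and $\hat X_{\bar\lambda},\hat Y_{\bar\lambda}$ are linearly independent, a combination $\alpha\hat X_{\bar\lambda}+\beta\hat Y_{\bar\lambda}+\gamma\hat Z_{\bar\lambda}$ lies in $V$ only when $\alpha=\beta=0$, so $\mathcal{I}_{\bar\lambda}\cap V=\mathrm{span}\{\hat Z_{\bar\lambda}\}$. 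Hence if $\mathcal{I}_{\bar\lambda_1}=\mathcal{I}_{\bar\lambda_2}$ then $\hat Z_{\bar\lambda_1}=k\,\hat Z_{\bar\lambda_2}$ for some $k\neq0$; comparing the coefficient of $\mathbf{YZ}^0$ gives $a_0=k\,a_0$, i.e. $k=1$ because $a_0=1$, and then comparing the coefficient of $\mathbf{YZ}^1$ gives $a_1(\bar\lambda_1)=a_1(\bar\lambda_2)$, which forces $\bar\lambda_1=\bar\lambda_2$ since $a_1(\lambda)=\lambda$ by (\ref{recursiverel}). (For $n=2$ there is a single root and nothing to check.)

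Combining (i) and (ii) gives the stated $n-1$ pairwise-disjoint simple ideals, each $\cong su(2)$. For completeness I would finish with a dimension count: these ideals pairwise commute, so their span is a direct sum of dimension $3(n-1)=3n-3$, which together with the two-dimensional center (Proposition~\ref{centerchar}) accounts for all of $\dim\mathcal{L}=3n-1$ (Theorem~\ref{DLAchar}); hence $\mathcal{L}$ has no further simple ideals and is the direct sum of its center with these $n-1$ copies of $su(2)$.
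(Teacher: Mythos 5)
Your proof is correct and follows essentially the same route as the paper's: the intersection of two such ideals is trivial or everything by simplicity, and equality of ideals would force $\hat Z_{\bar\lambda_1}=k\,\hat Z_{\bar\lambda_2}$, whence comparing the $\mathbf{YZ}^0$ and $\mathbf{YZ}^1$ coefficients (i.e., $a_0=1$ and $a_1(\lambda)=\lambda$) gives $\bar\lambda_1=\bar\lambda_2$. Your step (ii) identification $\mathcal{I}_{\bar\lambda}\cap\mathrm{span}\{\mathbf{YZ}^k\}=\mathrm{span}\{\hat Z_{\bar\lambda}\}$ and the closing dimension count usefully fill in details the paper leaves implicit.
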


In conclusion we can summarize the main results of this section in the following theorem which describes the dynamical Lie algebra of the system (\ref{Hamiltonian}), (\ref{H0H1}). 

\begin{theorem}\label{MainSum}
The dynamical Lie algebra ${\cal L}$ of the quantum control system  (\ref{Hamiltonian}), (\ref{H0H1}) has a basis given by $\{ {\bf Y}^k,{\bf Z}^k, {\bf YZ}^k, {\bf X}, {\bf XX}  \, | \, k=0,...,n-2\}$, defined in (\ref{Ys}), (\ref{Zs}), (\ref{YZs}), (\ref{Xs}). It is the direct sum of a two dimensional center spanned by the matrices in (\ref{centerodd}) if $n$ is odd or (\ref{centereven}) if $n$ is even and $n-1$ simple ideals isomorphic to $su(2)$. Such ideals are parametrized by the roots of the polynomial $a_{n-1}=a_{n-1}(\lambda)$ which coincide with the eigenvalues of the matrix $A_{n-1}$ defined in Lemma \ref{epiphany}. For a fixed root $\bar \lambda$ a basis of the associated ideal is given by $\{ \hat X_{\bar \lambda}, \hat Y_{\bar \lambda}, \hat Z_{\bar \lambda}\}$ defined in (\ref{Xhatted}), (\ref{Yhatted}), (\ref{Zhatted}).   
\end{theorem}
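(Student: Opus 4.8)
The plan is to assemble Theorem \ref{MainSum} as a straightforward corollary of the results already proved in this section, so the main work is bookkeeping rather than new computation. First I would invoke Theorem \ref{DLAchar} verbatim for the basis claim: the $3n-1$ matrices $\{{\bf Y}^k,{\bf Z}^k,{\bf YZ}^k,{\bf X},{\bf XX}\mid k=0,\dots,n-2\}$ are linearly independent and span ${\cal L}$, so nothing remains to be shown there. Next I would recall the general structure theory cited earlier: as a Lie subalgebra of $su(2^n)$ (hence of a compact Lie algebra), ${\cal L}$ is reductive, so ${\cal L}={\cal Z}\oplus{\cal S}$ with ${\cal Z}$ the center and ${\cal S}$ semisimple, and ${\cal S}$ decomposes uniquely into simple ideals. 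Proposition \ref{centerchar} identifies ${\cal Z}$ explicitly and shows $\dim{\cal Z}=2$, so $\dim{\cal S}=(3n-1)-2=3n-3$.

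Then I would bring in the $n-1$ ideals: by Proposition \ref{keyprop} each real root $\bar\lambda$ of $a_{n-1}$ gives an ideal ${\cal I}_{\bar\lambda}\subseteq{\cal L}$, which lies in ${\cal S}$ (it is orthogonal to ${\cal Z}$ by the footnote to that proposition, and a reductive ideal with trivial center-intersection must itself be semisimple — indeed simple by the dimension argument already given, hence $\cong su(2)$ via Proposition \ref{su2}). By the discussion established before Proposition \ref{distinct}, $a_{n-1}$ has $n-1$ distinct real roots (the eigenvalues of the symmetric matrix $A_{n-1}$, each simple), and Proposition \ref{distinct} says distinct roots yield distinct — hence, being simple ideals in a semisimple algebra, pairwise-commuting and independent — ideals. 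I would then count dimensions: the $n-1$ ideals $\bigoplus_{\bar\lambda}{\cal I}_{\bar\lambda}$ sit inside ${\cal S}$ and have total dimension $3(n-1)=3n-3=\dim{\cal S}$, so ${\cal S}=\bigoplus_{\bar\lambda}{\cal I}_{\bar\lambda}$. Finally, for each fixed root $\bar\lambda$, Proposition \ref{keyprop} gives that $\{\hat X_{\bar\lambda},\hat Y_{\bar\lambda},\hat Z_{\bar\lambda}\}$ spans ${\cal I}_{\bar\lambda}$, completing every clause of the theorem.

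The only genuine subtlety — the step I would flag as needing a sentence of care rather than a citation — is justifying that $\sum_{\bar\lambda}{\cal I}_{\bar\lambda}$ is a \emph{direct} sum inside ${\cal S}$ before one knows it exhausts ${\cal S}$. This follows because distinct simple ideals of a semisimple Lie algebra intersect trivially and their sum is direct (any element in the intersection of one ideal with the sum of the others is central in that ideal, hence zero); Proposition \ref{distinct} already supplies the pairwise-disjointness, so the directness is automatic. With directness in hand the dimension count $3(n-1)=\dim{\cal S}$ forces equality, and no classification theory beyond the explicit $su(2)$ isomorphism of Proposition \ref{su2} is needed. Everything else is quotation of the labelled results in order.
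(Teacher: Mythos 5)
Your proposal is correct and matches the paper's intent exactly: the paper states Theorem \ref{MainSum} without a separate proof, presenting it as a summary assembled from Theorem \ref{DLAchar}, Proposition \ref{centerchar}, Propositions \ref{keyprop}--\ref{su2}, Proposition \ref{distinct}, and the dimension count $2+3(n-1)=3n-1$, which is precisely the assembly you describe. Your extra remark on why the sum of the simple ideals is direct (pairwise trivial intersection of simple ideals forces directness) is a point the paper handles only implicitly via its "zero intersection or coincide" observation, so your version is, if anything, slightly more careful.
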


\section{ Example}\label{LDC}

We conclude the paper with an example of application of its main controllability result. We consider the Ising spin chain (\ref{Hamiltonian}), (\ref{H0H1}) with $n=3$ and ask the question of the possible states that can be reached starting from $|000\rangle$. Since the initial state is fully separable, we shall also ask the question of the maximum {\it distributed entanglement}  that can be generated starting from this state. We shall use as  a measure of the distributed entanglement the {\it tangle} among the three qubits as defined in \cite{Wootters} and elaborated upon in \cite{QIC}. 

Specializing the results to the case $n=3$, we find that, the dynamical Lie algebra ${\cal L}$ associated with the Ising spin chain  has a basis $\{ {\bf Y}^0, {\bf Y}^1, {\bf Z}^0, {\bf Z}^1, {\bf YZ}^0, {\bf YZ}^1, {\bf X}, {\bf XX} \}$, and center spanned by $\{ {\bf Y}^0 + {\bf Z}^0 + {\bf XX}, {\bf Y}^1 + {\bf Z}^1 -{\bf X} \}$. The corresponding polynomial $a_2(\lambda) = \lambda^2-1$ has the roots $\lambda=\pm1$, and, therefore, the simple ideals isomorphic to $su(2)$, ${\cal I}_{1}$ and ${\cal I}_{-1}$, are the ones generated by ${\bf YZ}^0 + {\bf YZ}^1$ and ${\bf YZ}^0 -{\bf YZ}^1$. They   have bases
\begin{equation}\label{basis1}
\{ {\bf Y}^1 -{\bf Z}^0 + {\bf X} + {\bf XX}, -2{\bf Y}^0- {\bf Y}^1 + {\bf Z}^0 + 2{\bf Z}^1 + {\bf X} + {\bf XX},  {\bf YZ}^0 + {\bf YZ}^1 \},
\end{equation}
\begin{equation}\label{basis2}
\{ 2{\bf Y}^0- {\bf Y}^1 - {\bf Z}^0 + 2{\bf Z}^1 + {\bf X} - {\bf XX}, {\bf Y}^1 + {\bf Z}^0 + {\bf X} - {\bf XX},  {\bf YZ}^0 - {\bf YZ}^1 \}
\end{equation}
respectively. Now we will make use of the fact that the invariant subspaces of $u^{C_n}(2^n)$ are also invariant for the Lie algebra under consideration.\footnote{We use the notation $u^G(N)$ for the subalgebra of $u(N)$ that commutes with the group $G$. Since ${\cal L} \subseteq u^{C_n}(2^n)$, invariance with respect to $u^{C_n}(2^n)$ implies invariance with respect to ${\cal L}$.} In the case $n=3$, the invariant subspaces  of $u^{C_n}(2^n)$ are discussed in the paper \cite{conJonas}. In particular, our initial state $|000\rangle$ is in the invariant subspace spanned by the permutation invariant states,  also called the {\it Dicke states}, 
\begin{equation}\label{Dicke}
|\phi_0\rangle:=|000\rangle,  \, |\phi_1\rangle :=|111\rangle, \,  |\phi_2\rangle:=\frac{1}{\sqrt{3}}\left( |100\rangle+|010\rangle+|001\rangle \right), \, |\phi_3\rangle:=\frac{1}{\sqrt{3}}\left(|011\rangle+|101\rangle+|110\rangle \right),
\end{equation}
 Therefore, it suffices to consider the $4$-dimensional invariant subspace spanned by the basis in (\ref{Dicke}).  By calculating the action of the basis elements of ${\cal I}_{1}$ in (\ref{basis1}) on the basis $(\ref{Dicke})$, it can be  seen that each element has the form $\sigma \otimes A$, where $\sigma$ may be any  element in $su(2)$ and $A=\begin{pmatrix} 1 & 1  \cr  1 & 1 \end{pmatrix}$. Similarly, the action of the basis elements of ${\cal I}_{-1}$ in (\ref{basis2})  have the form $\sigma \otimes B$, where $B=\begin{pmatrix} 1 & -1  \cr  -1 & 1 \end{pmatrix}$. Furthermore, two basis elements of the center take  the form of $-3i {\bf 1} \otimes \sigma_x$ and $3i {\bf 1} \otimes {\bf 1}$. 
The matrix $A$ has the eigenvalues $2,0$ with the eigenvectors $\vec v_2= \begin{pmatrix} 1  \cr  1 \end{pmatrix}$ and $\vec v_0= \begin{pmatrix} 1  \cr  -1 \end{pmatrix}$, respectively, while  $B$ has the same eigenvalues with the swapped eigenvectors $\vec v_0 = \begin{pmatrix} 1  \cr  1 \end{pmatrix}$ and $\vec v_2 = \begin{pmatrix} 1  \cr  -1 \end{pmatrix}$. This implies that the 4-dimensional invariant subspace can be split  into the direct sum of  two invariant subspaces 
$V_1= \texttt{span} \left\{ v \otimes \begin{pmatrix} 1 \cr 1 \end{pmatrix} \,|\, v \in \mathbb{C}^2 \right\} $ and $V_2= \texttt{span} \left\{ v \otimes \begin{pmatrix} 1 \cr -1 \end{pmatrix} \, | \, v \in \mathbb{C}^2 \right\} $.

Writing the initial state according to its components in $V_1$ and $V_2$ as  
$|000\rangle = \frac{1}{2}\begin{pmatrix} 1  \cr  0 \end{pmatrix} \otimes \begin{pmatrix} 1  \cr  1 \end{pmatrix} + \frac{1}{2}\begin{pmatrix} 1  \cr  0 \end{pmatrix} \otimes \begin{pmatrix} 1  \cr  -1 \end{pmatrix}$,  we can act with arbitrary matrices in $SU(2)$ on the first components in the tensor products appearing in this decomposition and we can add  an arbitrary phase difference between the two components. This leads to  the set of  reachable states from $|000\rangle$, in the basis of the normalized Dicke states, (neglecting a common unphysical, phase factor)  
\begin{equation}\label{insieme}
{\cal O}:=\left\{ \vec v \in \mathbb{C}^4 \,\Big| \, \vec v=  \frac{e^{i\mu}}{2}  \begin{pmatrix} e^{i\phi} \cos(\theta) \cr e^{i\phi} \cos(\theta) \cr e^{i\zeta} \sin(\theta) \cr e^{i\zeta} \sin(\theta) \end{pmatrix} + \frac{e^{-i\mu}}{2} \begin{pmatrix} e^{i \alpha} \cos(\gamma) \cr - e^{i \alpha} \cos(\gamma) \cr e^{i \beta} \sin(\gamma) \cr - e^{i\beta} \sin(\gamma) \end{pmatrix},  \,  \,  \phi, \zeta, \psi, \alpha, \beta, \gamma, \mu \in \mathbb{R} \right\}.
\end{equation}
The set ${\cal O}$ in (\ref{insieme}) shows that not all the states in the symmetric sector  can be achieved with our control system because if we write the Schmidt decomposition (see, e.g., \cite{NC}) of a vector $\psi=r_1 \vec e_1 \otimes \vec f_1+ r_1 \vec e_2 \otimes \vec f_2$ with $\{ \vec e_1, \vec e_2\}$ and $\{ \vec f_1, \vec f_2\}$ orthonormal bases of $\mathbb{C}^2$, we are constrained to have $r_1=r_2$. We shall show however that the class of states in (\ref{insieme}) can achieve maximum distributed entanglement. Consider  the family of states ${\cal {F}}:=\{ \vec v=\begin{pmatrix}\cos(\theta), \,  0, \,    \sin(\theta), \,    0  \end{pmatrix}^T \,|\, \theta \in \mathbb{R}\}$, which is a subset of ${\cal O}$ in (\ref{insieme}) by setting $\alpha=\phi=\zeta=\beta=\mu=0$ and $\gamma=\theta$.  By applying the formula for the tangle on the symmetric sector \cite{QIC} (which simplifies in the case where the last component of the vector is zero), we  obtain the tangle $\tau$ as a function of $\theta$. In particular, we have
$$
\tau=\frac{16}{3 \sqrt{3}} |\cos(\theta)\sin^3(\theta)|. 
$$
It is a calculus exercise to show that the maximum of this function is obtained when $|\sin(\theta)|=\frac{\sqrt{3}}{2}$ and $|\cos(\theta)|=\frac{1}{2}$. This maximum is equal to $1$. Therefore we have that the dynamics of the system (\ref{Hamiltonian}), (\ref{H0H1}) can induce maximum distributed entanglement, and in fact, it does that starting from the separable state $|000\rangle$. This and other quantum information theoretic properties are of interest for general $n$ with the allowed dynamics described in this paper.

\section*{Acknowledgement} This research was supported by the ARO MURI grant W911NF-22-S-0007. The authors wish 
to thank Prof. D. Lidar for helpful comments and suggestions. They also would like to thank Prof. T. Iadecola for a useful 
discussion on quantum phase transitions and for pointing out reference \cite{Scare}.

\begin{appendix}

\section{Proof of Lemma \ref{expexpr}}

\begin{proof}
For $k=1,2$,  it is directly verified that (\ref{recursiverel}) and (\ref{EE1}) give the same result. For $k >2$ we   calculate using (\ref{EE1}) 
$$
\lambda a_{k-1}-a_{k-2}= \sum_{j=0}^{\lfloor{\frac{k-1}{2}}\rfloor} (-1)^j \begin{pmatrix}  k-1-j\cr j\end{pmatrix} \lambda^{k-2j}- 
 \sum_{j=0}^{\lfloor{\frac{k-2}{2}}\rfloor} (-1)^j \begin{pmatrix}  k-2-j\cr j\end{pmatrix} \lambda^{k-2-2j}=
$$ 
$$
\sum_{j=0}^{\lfloor{\frac{k-1}{2}}\rfloor} (-1)^j \begin{pmatrix}  k-1-j\cr j\end{pmatrix} \lambda^{k-2j}+
 \sum_{j=1}^{\lfloor{\frac{k-2}{2}}\rfloor+1} (-1)^j \begin{pmatrix}  k-1-j\cr j-1\end{pmatrix} \lambda^{k-2j}
$$
Assume that $k$ is even. Then we have 
$$
\lambda a_{k-1}-a_{k-2}=\sum_{j=0}^{\frac{k-2}{2}} (-1)^j \begin{pmatrix} k-1-j \cr j \end{pmatrix} \lambda^{k-2j} +\sum_{j=1}^{\frac{k}{2}} (-1)^j \begin{pmatrix}k-1-j \cr j-1 \end{pmatrix}\lambda^{k-2j}=
$$
$$
\lambda^k+(-1)^{\frac{k}{2}}+ \sum_{j=1}^{\frac{k}{2}-1} \left\{\begin{pmatrix} k-1-j \cr j \end{pmatrix}+ \begin{pmatrix} k-1-j \cr j-1 \end{pmatrix} \right\} \lambda^{k-2j}, 
$$
which coincides with (\ref{EE1}).\footnote{Because of the relation $\begin{pmatrix} k-1-j \cr j \end{pmatrix}+ \begin{pmatrix} k-1-j \cr j-1 \end{pmatrix}=\begin{pmatrix} k-j \cr j\end{pmatrix}$.}

Now assume $k$ is odd. Then $\lfloor \frac{k-1}{2} \rfloor=\lfloor \frac{k-2}{2} \rfloor+1 =\lfloor \frac{k}{2} \rfloor$, and using the fact that 
$\begin{pmatrix} k-1 \cr 0  \end{pmatrix}=\begin{pmatrix} k \cr 0  \end{pmatrix}=1$, the sum becomes 
$$
\lambda a_{k-1}-a_{k-2}= \lambda^k+ \sum_{j=1}^{\lfloor \frac{k}{2} \rfloor} (-1)^j  \left\{ \begin{pmatrix} k-1-j \cr j \end{pmatrix}+\begin{pmatrix} k-1-j \cr j-1 \end{pmatrix}  \right\}\lambda^{k-2j}= \lambda^k+  \sum_{j=1}^{\lfloor \frac{k}{2} \rfloor} (-1)^j \begin{pmatrix} k- j \cr j \end{pmatrix} \lambda^{k-2j}, 
$$
which coincides with (\ref{EE1}).

\end{proof}

\section{Proof of Proposition \ref{su2}}

\begin{proof}
 From (\ref{tildasX}) we can write 
 $({\bf X}-{\bf Z}^0)=
 \frac{2a_0}{\| \tilde S_x\|^2} \tilde S_x+ 
 ({\bf X}-{\bf Z}^0)_{\perp}=
 \frac{a_0}{\sum_{k=0}^{n-1}(a_k-a_{k-1})^2}\tilde S_x+ ({\bf X}-{\bf Z}^0)_{\perp}$ 
where 
$(A)_{\perp}$  denotes the component of $A$ in the orthogonal complement of the ideal ${\cal I}$ (which commutes with ${\cal I}$), and 
analogously  from (\ref{tildasY})   $({\bf X}+{\bf Z}^0)=\frac{2a_0}{\|\tilde S_y\|^2} \tilde S_y+ ({\bf X}+{\bf Z}^0)_{\perp}
=\frac{a_0}{\sum_{k=0}^{n-1}(a_k+a_{k-1})^2} \tilde S_y+ 
({\bf X}+{\bf Z}^0)_{\perp}$. These formulas are useful when calculating the commutators among $\tilde S_x$, $\tilde S_y$ and $\hat Z$ without using  all the Lie brackets between the elements of the basis in Theorem \ref{DLAchar}.

We shall prove that 
\begin{equation}\label{commureln9}
[\tilde S_x, \tilde S_y]=-(4-2 \bar \lambda)\|\tilde S_y\|^2 \hat Z, \qquad [\tilde S_y, \hat Z]=-2 \|\tilde S_y\|^2 \tilde S_x, \qquad [\hat Z, \tilde S_x]=-2 \|\tilde S_x\|^2 \tilde S_y, 
\end{equation}
from which, using (\ref{SxSySz7}), relations (\ref{commureln}) follow. Write $\tilde S_y=\frac{\|\tilde S_y\|^2}{2}({\bf X}+{\bf Z}^0)-\frac{\|\tilde S_y\|^2}{2}({\bf X}+{\bf Z}^0)_{\perp}$, 
so that we have $[\tilde S_x,\tilde S_y]=\frac{\|\tilde S_y\|^2}{2} [\tilde S_x, {\bf X}+{\bf Z}^0]$. Using (\ref{tildasX}) and the commutation table with the generators, Table I, 
we obtain 
$$
[\tilde S_x,\tilde S_y]=-\|\tilde S_y\|^2 \left(2 a_0 {\bf YZ}^0+2a_{n-2}{\bf YZ}^{n-2}+2 \sum_{k=1}^{n-2}(a_k-a_{k-1}){\bf YZ}^k+2 \sum_{k=0}^{n-3}(a_k-a_{k+1}){\bf YZ}^k \right)=
$$
$$
-\|\tilde S_y\|^2\left(   (4a_0-2a_1) {\bf YZ}^0+ (4a_{n-2}-2a_{n-3}){\bf YZ}^{n-2} +2\sum_{k=1}^{n-3} (2a_k-a_{k+1} -a_{k-1}) {\bf YZ}^k \right)=
$$
$$
-\|\tilde S_y\|^2\left( 4-2\bar \lambda \right) \left( \sum_{k=0}^{n-2} a_k {\bf YZ}^k\right)= -(4-2 \bar \lambda)\|\tilde S_y\|^2 \hat Z,
$$
where we used $a_1=\bar \lambda a_0$, $a_{k-1}+a_{k+1}=\bar \lambda a_k$ (from (\ref{recursiverel}))  and $a_{n-3}=\bar \lambda a_{n-2}$, from the fact that $a_{n-1}=\bar \lambda a_{n-2}-a_{n-3}=0$, along with the definition (\ref{Zhatted}). Similar calculations lead to the second and third one in (\ref{commureln9}).  
\end{proof}

\end{appendix}


\begin{thebibliography}{99}


\bibitem{QIC} F. Albertini and D. D'Alessandro, Symmetric states and dynamics of three quantum bits, {\it Quantum Information and Computation}, 
Vol. 22, No. 7 \& 8 (2022) 05410568

\bibitem{Ball} D. Bump, {\it Lie Groups}, second edition, Graduate Texts in Mathematics, 225, Springer, New York, 2013. 

\bibitem{BP} M. Cerezo, Martin Larocca, Diego García-Martín, N. L. Diaz, Paolo Braccia, Enrico Fontana, Manuel S. Rudolph, Pablo Bermejo, Aroosa Ijaz, Supanut Thanasilp,
 Eric R. Anschuetz, Zoë Holmes, Does provable absence of barren plateaus imply classical simulability? Or, why we need to rethink variational quantum computing, 
arXiv:2312.09121



\bibitem{Marco2} M. Cerezo, G.  Verdon, Hsin-Yuan Huang, Lukasz Cincio, and P. J. Coles, 
Challenges and Opportunities in Quantum Machine Learning, {\it Nature Computational Science}  10.1038/s43588-
022-00311-3 (2022).


\bibitem{Wootters} V. Coffman, J. Kundu, and W. K. Wootters, Distributed entanglement, {\it Phys. Rev. A} 61, 052306 (2000).

\bibitem{Mikobook} D.D'Alessandro, {\it Introduction to Quantum Control and Dynamics}, 2-nd Edition, CRC Press, {\it Taylor and Francis, Boca Raton,} FL, 2021. 


\bibitem{conJonas} D. D'Alessandro and J. Hartwig, Dynamical Decomposition of Bilinear Control Systems subject to Symmetries,  {\it Journal of Dynamical and Control Systems} 27 No. 1, 1-30,  (2021).  

\bibitem{Marco1} S. Kazi, M. Larocca, and M. Cerezo, Subspace controllability and failure of universality for
$S_n$-equivariant k-body gates, technical report, 2023, submitted.

\bibitem{DanielAN} A. D. King, S. Suzuki, J. Raymond, A. Zucca, T. Lanting, F. Altomare, A. J. Berkley, S. Ejtemaee, E. Hoskinson, S. Huang, E. Ladizinsky, A. J. R. MacDonald, G. Marsden, T. Oh, G. Poulin-Lamarre, M. Reis, C. Rich, Y. Sato, J. D. Whittaker, J. Yao, R. Harris, D. A. Lidar, H. Nishimori, and  M. H. Amin, 
Coherent quantum annealing in a programmable $2,000$  qubit Ising chain, {\it Nature Physics}  volume 18, pages1324-1328 (2022)

\bibitem{NC} M. A. Nielsen and I. L. Chuang, {\it Quantum Computation and Quantum Information}, Combridge University Press, Cambridge UK, 2000



\bibitem{Pierotti} P. Pfeuty, The one-dimensional Ising model with a transverse field, {\it Annals of Physics}, {\bf 57}, 79-90, (1970). 

\bibitem{Reggio} B. Reggio, N. Butt, A. Lytle, P. Draper, Fast Partitioning of Pauli Strings into Commuting 
Families for Optimal Expectation Value Measurements of Dense Operators, arXiv:2305.11847v2. 

\bibitem{QPT} S. Sachdev, {\it Quantum Phase Transitions}, 2-nd Edition, Cambridge University Press, 2011. 

\bibitem{Dan1} X. Wang, D. Burgarth, S. Schirmer, Subspace controllability of spin $\frac{1}{2}$ chains with symmetries, {\it Phys. Rev. A}  94 (2016) 052319. 


\bibitem{Dan2} X. Wang, P. Pemberton-Ross and S. G. Schirmer, Symmetry and controllability for spin net-
works with a single node control, {\it IEEE Transactions on Automatic Control}, 57(8), 1945- 1956 (2012)

\bibitem{Scare} R. Wiersema, E. Kokcu, A.F. Kemper, and B.N. Bakalov, Classification of dynamical Lie algebras for translation-invariant 2-local spin systems
in one dimension, arxiv: 2309.05690

\bibitem{Bacter} H. Wioland, F.G. Woodhouse, J. Dunkel and R.E. Goldstein, Ferromagnetic and antiferromagnetic order in 
bacterial vortex lattices, Nature Phys 12, 341?345 (2016). https://doi.org/10.1038/nphys3607


\end{thebibliography}
\end{document}